\DeclareMathOperator{\supp}{supp}
\DeclareMathOperator*{\argmax}{arg\,max}
\newtheoremstyle{break}
{\topsep}{\topsep}{\itshape}{}{\bfseries}{}
{\newline}{\thmname{#1}\thmnumber{ #2}\thmnote{ (#3)}}
\theoremstyle{break}
\newtheorem{theorem}{Theorem}
\newtheorem{corollary}{Corollary}
\newtheorem{lemma}{Lemma}
\newtheorem{proposition}{Proposition}
\newtheoremstyle{breakdefin}
{\topsep}{\topsep}{\normalfont}{}{\bfseries}{}
{\newline}{\thmname{#1}\thmnumber{ #2}\thmnote{ (#3)}}
\theoremstyle{breakdefin}
\theoremstyle{remark}
\begin{document}

\title{Screening and Segmenting: \\ A Consumer Surplus Perspective}
%\date{}
\author{Dirk Bergemann \and Tibor Heumann \and Michael C. Wang\thanks{Bergemann: Department of Economics, Yale University (email: dirk.bergemann@yale.edu); Heumann: Instituto de Econom\'{i}a, Pontificia Universidad Cat\'{o}lica de Chile (email: tibor.heumann@uc.cl); Wang: Department of Economics, Yale University (email: michael.wang.mcw75@yale.edu). The order of authors is alphabetical.  An early version of this paper working in a more limited setting appeared
under the title \textquotedblleft A Unified Approach to Second and Third
Degree Price Discrimination.\textquotedblright\ We acknowledge financial
support from NSF grants SES-2001208 and SES-2049744 and ANID Fondecyt
Regular 1241302. We acknowledge many productive suggestions from Marzena
Rostek (Co-Editor) and three anonymous referees. We have benefited from many
conversations and related joint work with Ben Brooks and Stephen Morris. We
thank Eugenio Miravete and Justin Johnson for many helpful suggestions. We
thank David Wambach for excellent research assistance.}}

\maketitle

\begin{abstract}
We analyze consumer surplus when a monopolist can
adjust both prices and product qualities across segments, engaging in
second- and third-degree price discrimination simultaneously. We
characterize the consumer-optimal segmentation and show that it has a
striking structure: consumers with the same value receive the same quality
in every segment, though prices differ. Under mild conditions, any
segmentation harms consumers if and only if demand is sufficiently more elastic than supply. Hence, potential benefits for consumers depend
critically on demand and supply elasticities. These findings have implications
for regulatory policy regarding price discrimination and market segmentation.

\vspace{0.5\baselineskip} \noindent \textsc{JEL Classification:} D42, D83,
L12

\noindent \textsc{Keywords:} Price Discrimination, Nonlinear Pricing,
Private Information, Second Degree Price Discrimination, Third Degree Price
Discrimination, Bayesian Persuasion
\end{abstract}

\newpage

\section{Introduction}

\subsection{Motivation and Results}

Market segmentation and product screening are two of the most extensively
studied forms of price discrimination, yet their interaction remains poorly
understood. A firm that segments its market into distinct groups can tailor
not only prices but entire product menus to each group, combining
third-degree price discrimination---different pricing for different
segments---with second-degree price discrimination---screening consumers
through quality-differentiated menus within each segment. This combination
is pervasive in practice: streaming services offer differentiated
subscription tiers to behaviorally segmented audiences, airlines pair yield
management across customer classes with fare-class menus within each class,
and digital platforms simultaneously personalize prices and adjust product
offerings based on consumer data. The central question of this paper is:
when does such combined price discrimination benefit consumers?

The welfare implications of market segmentation have long interested
economists. Third-degree price discrimination---charging different prices to
distinct market segments---may either benefit or harm consumers. A similar
ambiguity exists for second-degree price discrimination, where sellers
screen consumers through quality-differentiated product menus. However, most
research has analyzed these practices in isolation, leaving open the
question of how they interact when deployed simultaneously, as is
increasingly common in practice.

We provide a complete characterization of the consumer-optimal market
segmentation when a monopolist sells goods of varying quality to privately
informed buyers. In our model, the market is partitioned into arbitrary
segments (or submarkets), each characterized by its own distribution of
buyer values, subject only to the requirement that the segments aggregate to
the original market. Taking this segmentation as given, the seller offers a
profit-maximizing screening menu within each segment. We then study the
segmentation that maximizes aggregate consumer surplus.

The consumer-optimal segmentation has a striking structure. Theorem \ref%
{thm:main} establishes that buyers with the same willingness to pay receive
the same quality in every segment even though the monopolist could offer
them different qualities in different segments. Prices, however, vary across
segments: identical buyers may pay different amounts for the same product
across segments. This result is obtained through a dramatic simplification
of the market segmentation problem. The original optimization is over
segmentations---distributions over distributions of values---an inherently
infinite-dimensional and unwieldy object. We can reduce this to maximizing a
single functional, the expected \emph{local information rent}, over a single 
\emph{inverse hazard rate function}, subject to a \emph{majorization
constraint}. The local information rent captures the contribution of buyers
of value $v$ to aggregate consumer surplus as a function of the inverse
hazard rate at that value. The majorization constraint characterizes the
precise limits on how segmentation can redistribute buyers across segments.
This reduction is the paper's central methodological contribution, and it
relies on a novel application of majorization techniques to the space of
inverse hazard rates.

A second consequential finding concerns the conditions under which segmentation
benefits consumers at all. Under a regularity condition on demand and supply, we identify a sharp threshold determined by
demand and supply elasticities (Corollary \ref{cor:no-seg2}). When
aggregate demand is sufficiently elastic relative to the supply, no segmentation can
improve consumer surplus. In such markets, the monopolist's screening
already serves consumers well enough that any redistribution of buyers
across segments only makes consumers worse off. When this condition fails,
the consumer-optimal segmentation takes a particularly simple form (Theorem %
\ref{thm:conv-seg}): it consists of \textquotedblleft convex
segments\textquotedblright\ with nested interval supports that shrink from
below. Moreover, the optimal pricing differs only by a fixed
fee, while the quality menu remains uniform across segments.

These results delineate precisely how the multi-product setting departs from
the unit-demand environment of \cite{bebm15}. There, segmentation can always
achieve efficiency; here, the scope for beneficial segmentation depends on a
quantifiable interaction between the demand and cost primitives. Our results
also clarify the relationship to \cite{hasi23}, who showed that in generic
markets with a finite number of goods, some segmentation always improves
consumer surplus. We show that as the quality space becomes
richer---approaching a continuum---the gains they identify can vanish
entirely, and we provide the precise conditions under which this occurs.

These results have important implications for competition policy and
regulation of price discrimination practices. They suggest that blanket
restrictions on market segmentation may harm consumers by preventing
welfare-enhancing price discrimination, while highlighting specific market
conditions where segmentation is more likely to be beneficial. The findings
also inform ongoing debates about big data and personalized pricing by
showing how consumer heterogeneity and cost structures interact to determine
the nature and effects of optimal segmentation strategies.

The paper is organized as follows. Section \ref{sec:setup} introduces our
model of nonlinear pricing with market segmentation. We consider a seller
who can engage simultaneously in second- and third-degree price
discrimination. Our model consists of a monopolist that offers goods of
varying quality to a continuum of buyers. The willingness-to-pay (value) is
private information to each buyer, and the seller only knows the
distribution of values. The seller may segment the market into submarkets,
each with its own distribution of values, subject only to the condition that
the distribution of values across all submarkets must conform to the
aggregate market. The monopolist offers an optimal pricing scheme in each
submarket.

In Section \ref{sec:ex}, we consider a simple binary value environment,
which illustrates the main concepts we will use throughout the paper. In the
screening problem, the consumer surplus is the information rent. The
contribution of each value to the consumer surplus is identified by the
product of the inverse hazard rate and the allocation. We provide a
convenient representation of this contribution as a function that depends
only on the inverse hazard rate, we refer to this function as the \emph{%
local information rent}. In the binary value environment, valuable
segmentations have the effect of lowering the inverse hazard rate, by \emph{%
concentrating} low-value buyers into a single segment, leaving them missing
in the remaining segments. The maximum consumer surplus corresponds to the
highest local information rent achievable with concentration.

Section \ref{sec:opt-seg} then extends this analysis to the general
environment. With more than two values, the designer has a second tool
alongside concentration: \emph{dilution}, which spreads buyers of a given
value across more segments and so \emph{raises} their inverse hazard rate.
Concentration and dilution are linked---diluting one value requires
concentrating higher values---and our first result (Proposition \ref{prop:avg-h}%
) characterizes exactly what they can jointly achieve. The key condition turns out to
be a familiar one: a \emph{majorization} constraint on inverse hazard rates.

Next, we introduce a class of segmentations, called \emph{uniform
segmentations}, in which the inverse hazard rate of every buyer of the same
value is equalized across segments. While a significant restriction, this
class of segmentations is still quite rich. Nonetheless, we show that the
optimal segmentation must lie within this class. Our main result, Theorem %
\ref{thm:main}, fully characterizes the consumer surplus attained by the
consumer-optimal segmentation. Theorem \ref{thm:main} shows that the value
of segmenting is found by maximizing the expected local information rent
over all inverse hazard rates satisfying the majorization constraint we
found previously. Interestingly, while the original problem consists of a
maximization over segmentations, which are distributions over distributions
of values, Theorem \ref{thm:main} yields a maximization over a single
distribution of values. Theorem \ref{thm:main} thus provides a remarkable
simplification of the original problem.

In Section \ref{sec:conv-seg}, we introduce a symmetric regularity condition---log-concavity---on demand and supply under which the consumer-optimal
segmentation (Theorem \ref{thm:conv-seg}) is easy to characterize. The  resulting
segmentation has the feature that the demand elasticity of low-value buyers
is increased to a fixed level determined by the cost function, while the
segments themselves have support shrinking from the bottom. We refer to
these as \emph{convex segmentations} because the constituent segments are
characterized by their nested convex interval supports, and their
construction is a generalization of the direct segmentations of \cite{bebm15}%
. The particularly simple characterization of the optimal segmentation under
these conditions provides more insight into the conditions under which zero
segmentation is optimal, i.e., any segmentation reduces consumer surplus
(Corollary \ref{cor:no-seg2}). When the model is specialized further to
power cost functions, we uncover a tight connection between supply and
demand elasticities in the optimal segmentation. Section \ref{sec:gen-exts}
then discusses how our methodology can be generalized to other settings,
including the case of discrete distributions, finding all Pareto-efficient
outcomes, and an application to asset trading with adverse selection.
Section \ref{sec:con} concludes. The appendix collects proofs omitted in the
main text.

\subsection{Related Literature}

Our results are related to a large literature on price discrimination,
beginning with \cite{pigo20} and now encompassing a wide range of research
on the output and welfare implications of price discrimination, such as \cite%
{robi33,schm81,vari85,agcv10,cowa12} and \cite{bebm15}. We use the classic
model of second-degree price discrimination via quality and quantity
differentiated products first presented by \cite{muro78} and \cite{mari84}. 
\cite{jomy03} consider a model of second degree price discrimination under
monopoly and duopoly and provide conditions under which the monopolist will
offer a single good.

The problem of extending the results of \cite{bebm15} to a multi-product
setting was analyzed earlier by \cite{hasi22,hasi23}. \cite{hasi22} show
that when the optimal menu in the aggregate market consists of a menu of
more than one item, thus a screening menu, then the consumer surplus
maximizing allocation cannot attain the Pareto frontier and hence the full
surplus triangle cannot be obtained. Based on this insight, they provide a
sufficient condition when the full surplus triangle will be attained, namely
when all value distributions lead to the efficient single item menu. By
contrast, we provide necessary and sufficient conditions for there to be a
segmentation that improves consumer surplus and provide the value of the
consumer surplus maximizing allocation, whether it is efficient or not. \cite%
{hasi23} show that in generic markets there is always a segmentation
relative to the single aggregate market that improves consumer surplus,
however they do not provide properties of the consumer-optimal segmentation.
In contrast to \cite{hasi22,hasi23}, who work with a finite number of
products, we allow for (but do not require) a continuum of qualities.
Therefore, the results in \cite{hasi23} do not generally apply to our
setting, and we find a large class of markets in which zero segmentation is
optimal for consumers, in addition to completely solving for the
consumer-optimal segmentation. We view our work as complementary to \cite%
{hasi23} in the sense that, with a continuum of qualities, the improvement
they identify sometimes becomes negligible, and we provide precise
conditions under which this does (or does not) occur.

Finally, the representation of consumer surplus through the inverse
hazard rate, central to our analysis, has not played this role in the earlier
nonlinear-pricing literature. The inverse hazard rate is of course not new: it
converts values into virtual values in \cite{muro78}, \cite{mari84}, and \cite%
{myer81}. There, however, it is an exogenous summary of the value distribution,
used to characterize the seller's optimal menu. The \emph{joint}
segmentation-and-screening problem promotes it from a primitive to an
instrument: segmentation chooses the within-segment distribution, and hence the
local inverse hazard rate subject to the majorization constraint, while
screening makes the allocation respond to it. The inverse hazard rate then
carries both the extensive margin (the mass of higher values who earn rent) and
the intensive margin (the allocative distortion through the virtual value). This
dual role requires a continuum of qualities together with endogenous
segmentation, a combination absent from both the single-market screening
literature and the finite-good analyses of \cite{hasi22,hasi23}.

\section{Model}

\label{sec:setup}

\paragraph{Payoffs and Menu Pricing}

A monopolist produces vertically differentiated products of varying quality $%
q\in \mathbb{R}_{+}$ at cost $c(q)$, where $c:\mathbb{R}%
_{+}\rightarrow \mathbb{R}_{+}$ is a thrice differentiable, strictly
increasing, and strictly convex function with $c^{\prime }(0)=0$. There is a
continuum of buyers, each characterized by a value $v\in V=[\underline{v},%
\overline{v}]\subset \mathbb{R}_{+}$ that is privately known. A buyer with
value (type) $v$ who purchases quality $q$ at price $p$ obtains net
utility $vq-p$.

The monopolist posts a menu of prices $p:\mathbb{R}_{+}\rightarrow \mathbb{R}%
_{+}$, or \emph{tariff}, which specifies a price $p(q)$ for each quality level $q$. By default, every menu includes the option to buy zero quality at zero price, or $p(0)=0$. (We use $p$
both for the function (menu) and for a particular item price $p$. The
context will identify the relevant interpretation.) Given a menu $p$, a
buyer chooses the quality that maximizes net utility: 
\begin{equation*}
U(v,p)\equiv \max_{q\in \mathbb{R}_{+}}\big[vq-p(q)\big],
\end{equation*}%
and the corresponding quality choice is denoted by 
\begin{equation*}
q({v})\equiv \underset{q\in \mathbb{R}_{+}}{\arg \max }\big[vq-p(q)\big].
\end{equation*}%
If multiple qualities maximize the utility of the buyer, ties are broken in
favor of the seller. The profit of the seller from a buyer with value $v$
when offering menu $p$ is: 
\begin{equation*}
\Pi (v,p)\equiv p(q({v}))-c(q({v})).
\end{equation*}

\paragraph{Market}

A \emph{market} $m\in \Delta V$ is a probability distribution over values $V$%
. $F_{m}$ denotes the cumulative distribution function (cdf) associated with 
$m$ (right-continuous by convention), $f_{m}(v)$ the density where $F_{m}$
is absolutely continuous, and $a_{m}(v)$ the mass of any atom at $v$.

In a given market $m$, the profit-maximizing menu $p_{m}$ solves: 
\begin{equation*}
p_{m}\equiv \argmax_{p(q)}\ \int_{\underline{v}}^{\overline{v}}\Pi (v,p)\
dF_{m}(v).
\end{equation*}%
If there are multiple optimal price menus, we select the one that results in
the highest consumer surplus. We pose the problem for the seller directly in
terms of a menu (or tariff), i.e.\ as an indirect rather than a direct
revelation mechanism. By the taxation principle of \cite{gula84}, these two
approaches are equivalent; our arguments are more transparent when stated in
the quantity-price space.

We denote by $q_{m}(v)$ the optimal choice of value $v$ in market $m$ when
the menu of prices is $p_{m}$. The aggregate consumer surplus in market $m$ is given
by: 
\begin{equation}
U(m)\equiv \int_{\underline{v}}^{\overline{v}}U(v,p_{m})\ dF_{m}(v).
\label{eq:cs}
\end{equation}

\paragraph{Segmentations}

We identify the \emph{aggregate market} with superscript \textquotedblleft $%
\ast $\textquotedblright : 
\begin{equation*}
m^{\ast }\in \Delta (V),
\end{equation*}%
which is the distribution of values of all buyers present in the economy.
And all the associated quantities associated to the aggregate market are
identified with a superscript \textquotedblleft $\ast $\textquotedblright .
For example, in the aggregate market, $F^\ast$ is the cdf and $p^{\ast }$ is
the optimal menu of prices. We assume that the aggregate market is absolutely
continuous with respect to the Lebesgue measure with strictly positive density on $[%
\underline{v},\overline{v}]$ and that $F^{\ast }$ is a real-analytic
function. Section \ref{subsec:other-ext} explains how our results change
when $F^{\ast }$ is discrete.

A \emph{segmentation} $\sigma $ is a distribution over markets $\sigma \in
\Delta (\Delta V)$ such that the composition across market segments equals
the aggregate market $m^{\ast }$. Formally, endow $\Delta V$ with the weak
topology, and let $\sigma $ be a Borel probability measure on $\Delta V$
such that 
\begin{equation}
\int_{m\in \Delta V}m\ d\sigma (m)=m^{\ast }.  \label{eq:agg}
\end{equation}%
We write $\operatorname{MPS}(m^{\ast })$ to denote the set of segmentations which
satisfy the aggregation constraint \eqref{eq:agg}, that is, are
mean-preserving spreads of $m^{\ast }$ in $\Delta (\Delta V)$.

Under a segmentation $\sigma $, the monopolist observes each buyer's segment
but not their value, and posts an optimal menu independently within each
segment or submarket. The seller thus engages simultaneously in
second-degree price discrimination (screening within each segment) and
third-degree price discrimination (across segments).

\paragraph{Consumer Surplus Maximization Problem}

Our central question is: which segmentation $\sigma $ of $m^{\ast }$
generates the highest aggregate consumer surplus? 
\begin{equation}
\max_{\sigma \in \operatorname{MPS}(m^{\ast })}\left[ \int_{m\in \Delta V}U(m)\
d\sigma (m)\right] .  \label{eq:u-opt}
\end{equation}%
We are interested in the value of this maximization problem, as well as the
segmentation $\sigma $ that attains the maximum. Two features of this
formulation deserve emphasis. First, the seller is unconstrained in how they
price within each segment. Second, segmentations are unrestricted: any
partition of the aggregate market is permissible, including those with gaps
and atoms in the constituent submarkets, so long as the aggregation
condition (\ref{eq:agg}) is satisfied.

\section{The Binary Value Case\label{sec:ex}}

This section develops the key economic ideas in a simple environment with
two possible values for the buyers.\footnote{This two-value example is purely
illustrative and outside the continuum model of Section \ref{sec:setup}, as a discrete distribution necessarily violates the assumption that $F^{\ast
}$ be absolutely continuous with full support.} The binary setting is rich enough to illustrate the central
trade-off underlying our results---the tension between the \emph{%
extensive} and \emph{intensive margins} of the information rents---and to
introduce the main
analytical tool, the \emph{local information rent function}, which will
carry over essentially unchanged to the general analysis. At the same time,
the setting is simple enough that all the economic forces are transparent.

\paragraph{Screening with Binary Values}

Suppose buyer values $v_{i}$ are drawn from $\{v_{1},v_{2}\}$
with $0<v_{1}<v_{2}<\infty $. We fix an aggregate market $m^*$ where the low value $%
v_{1}$ arises with probability $a(v_{1})$ and the high value $v_{2}$ with
complementary probability $a(v_{2})=1-a(v_{1})$. We initally just consider the aggregate market and hence can omit the index $m^*$ or $^*$.

Consider a profit-maximizing seller who offers a menu $(
(q_{1},p_{1}),(q_{2},p_{2}))$ of quality--price pairs. As is
standard in the screening literature, the binding constraints are
participation for the low value and incentive compatibility for the high
value, which pin down optimal prices as a function of qualities: 
\begin{equation*}
p_{1}=v_{1}q_{1},\ \ \ p_{2}=v_{1}q_{1}+v_{2}(q_{2}-q_{1}).
\end{equation*}%
Hence, the quality upgrade (or increment) $q_{2}-q_{1}$ is priced at $v_{2}$
but the baseline quality $q_{1}$ is priced at $v_{1}$ (we used that
incentive compatibility requires that $q_{1}<q_{2}$). Consequently, only
high-value buyers get any information rent and the consumer surplus $U$ is given by the expected information
rents: 
\begin{equation*}
U=a(v_{2})(v_{2}-v_{1})q_{1}.
\end{equation*}%
Consumer surplus depends on the quality offered to \emph{low-value} buyers,
even though it is \emph{high-value} buyers who capture the rents. This is
because the rents arise from the high value's ability to mimic the low value:
the better the low value's product, the larger the rent the high value
extracts by threatening to \textquotedblleft trade down.\textquotedblright\
The qualities supplied are of course chosen by the seller to maximize
profits. Hence, the seller chooses qualities $(q_{1},q_{2})$ that solve: 
\begin{equation*}
\max_{(q_{1},q_{2})\in \mathbb{R}_{+}^{2}}\left\{
a(v_{1})(v_{1}q_{1}-c(q_{1}))+a(v_{2})(v_{2}q_{2}-c(q_{2}))-a(v_{2})(v_{2}-v_{1})q_{1}\right\} .
\end{equation*}%
The first two terms are the social surplus generated by qualities $%
(q_{1},q_{2})$, while the last term is the information rents earned by
buyers. The quality $q_{2}$ offered to a buyer with high value $v_{2}$
solves the corresponding first order condition:%
\begin{equation*}
v_{2}=c^{\prime }(q_{2}).
\end{equation*}%
As usual, the highest value does not see any distortion in its allocation.
The quality $q_{1}$ offered to a buyer with low value $v_{1}$ solves:%
\begin{equation}
\phi(v_{1})\equiv v_{1}-(v_{2}-v_{1})\frac{a(v_{2})}{a(v_{1})}\leq
c^{\prime }(q_{1}),  \label{eq:vu}
\end{equation}%
where the above expression holds with equality if the \emph{virtual value} $%
\phi(v_{1})$ is positive. Note that the virtual value depends on the value
distribution. To write this condition explicitly, we denote the inverse
of the marginal cost $c^{\prime }(q)$ by $Q$: 
\begin{equation}
Q(v)\equiv \mathbbm{1}[v\geq 0](c^{\prime })^{-1}(v).  \label{eq:q}
\end{equation}%
We refer to $Q$ as the \emph{supply function}, since $Q(v)$ is the quality
that the monopolist would sell to a buyer of value $v$ if the monopolist
were to offer a socially efficient pricing scheme: $p^\prime (q)
=c^{\prime }(q)$. The supply function is 0 if the value $v$ (or
virtual value)\ is negative, captured by the indicator function $%
\mathbbm{1}[v\geq 0]$. We can write \eqref{eq:vu} using the supply $Q$ by
replacing the value $v_{1}$ with the virtual value $\phi(v_{1})$: 
\begin{equation}
q(v_{1})=Q\left( v_{1}-(v_{2}-v_{1})\frac{a(v_{2})}{a(v_{1})}\right)
=Q(\phi(v_{1})).  \label{eq:opt1}
\end{equation}

\paragraph{The Extensive vs.\ Intensive Margin Trade-Off}

The consumer surplus as a function of the distribution in aggregate market is thus 
\begin{equation}
U\equiv a(v_{2})(v_{2}-v_{1})q(v_{1}),  \label{eq:u}
\end{equation}%
and reveals a fundamental tension in how the market composition affects
consumer surplus. Consider what happens as the fraction of low-value buyers $%
a(v_{1})$ varies. Two forces work in opposite directions.

When there are more low-value buyers, the seller has a stronger incentive to
serve them well. A larger $a(v_{1})$ means a larger customer base whose
quality the seller would sacrifice to improve profits. Hence, the quality $%
q_{1}$ offered to the low value rises with $a(v_{1})$, and so does the
per-buyer rent $(v_{2}-v_{1})q_{1}$ enjoyed by each high-value buyer, the 
\emph{intensive margin} of the rent. But, as $a(v_{1})$ rises, the
fraction of high-value buyers---the \emph{extensive margin} of the
information rent---$a(v_{2})=1-a(v_{1})$ falls. Fewer buyers are collecting the rents.
Eventually, this force dominates, and consumer surplus declines.

Figure \ref{fig:cs-1} illustrates this trade-off for a quadratic cost $%
c(q)=q^{2}/2$ with $V = \{1,2\}$. We display the quantity $q(v_{1})$
delivered to the low-value buyers and the resulting expected consumer
surplus $U$ as a function of the probability $a(v_2)$ of high values.
Observe that when $a(v_2)$ is too high, the seller offers nothing to the low-value buyers, and hence the consumer surplus is also zero. As we decrease $%
a(v_2)$, $q(v_1)$ increases and hence so does $U$. But, when $a(v_2)$
gets too low, there are too few high values to benefit from the information
rents generated by $q(v_1)$, and $U$ eventually goes down again.

\paragraph{Binary Segmentation}
We now consider segmentations of the aggregate market $m^{\ast }$. We focus
on a natural class of binary segmentations $\sigma \in \Delta \{m,m^{\prime
}\}$ in which low-value buyers are absent in market $m^{\prime }$:
\begin{equation*}
a_{m}(v_{1})>0\text{ and }a_{m^{\prime }}(v_{1})=0.
\end{equation*}%

\paragraph{Segmentation as Concentration}
Since low-value buyers appear only in market $m$, their concentration must
satisfy%
\begin{equation}
a_{m}(v_{1})\geq a^{\ast }(v_{1})\text{ and }\sigma ({m})=\frac{a^{\ast
}(v_{1})}{a_{m}(v_{1})},  \label{eq:agg-ex}
\end{equation}%
and with equality when all high-value buyers are in market $m$ (no
concentration) and $a_{m}(v_{1})\rightarrow 1$ representing maximal
concentration. The aggregation constraint for the high-value buyers is then 
\begin{equation*}
\sigma (m)a_{m}(v_{2})+\sigma (m^{\prime })a_{m^{\prime }}(v_{2})=a^{\ast
}(v_{2}).
\end{equation*}

In market $m^{\prime }$, where only high-value buyers are present, the
seller charges the gross utility, so no information rents arise. Hence,
consumer surplus will be generated only in market $m$: 
\begin{equation*}
\sum_{x\in\{m,m^\prime\}}\sigma (x)U(x)=\sigma
(m)(v_{2}-v_{1})a_{m}(v_{2})q_{m}(v_{1}).
\end{equation*}%
The effect of concentration is intuitive. By gathering low-value buyers into
a single segment, the designer \emph{raises} their prevalence in this
segment, which increases the quality the seller offers them and hence the
information rents. The cost is that some high-value buyers are stranded in
the other segment, where they earn zero rents. The question is whether the
deeper rents in $m$ compensate for the narrower base of rent-earning buyers.

\paragraph{The Inverse Hazard Rate}

To analyze this trade-off more cleanly, observe that the market composition
affects consumer surplus entirely through a single statistic: the inverse
hazard rate of the low value, 
\begin{equation*}
h_{m}(v_{1})\equiv (v_{2}-v_{1})\frac{a_{m}(v_{2})}{a_{m}(v_{1})}.
\end{equation*}%
This quantity has a direct economic interpretation: it is the mass of
high-value buyers \emph{per unit of low-value buyers}, scaled by the
difference in values $v_{2}-v_{1}$. The inverse hazard rate captures both
the intensive margin of distortion (through the virtual value):%
\begin{equation*}
\phi _{m}(v_{1})\equiv v_{1}-h_{m}\left( v_{1}\right),
\end{equation*}%
and the extensive margin of rent-earning buyers (since it is proportional to the
high-to-low ratio). Concentration of low-value buyers \emph{lowers} the
inverse hazard rate: making the low value more prevalent reduces the
high-to-low ratio. This is the only tool available in the binary setting:
the segmentation designer can decrease $h_{m}(v_{1})$ relative to $h^{\ast
}(v_{1})$, but never increase it.

\paragraph{The Local Information Rent}

We can now express the consumer surplus $U$compactly using the \emph{local
information rent} function: 
\begin{equation}
u(v,h)\equiv h\cdot Q(v-h).  \label{eq:su}
\end{equation}%
We refer to $u(v,h)$ as the local information rent. Observe that $h$
plays two roles, both captured by the product $h\cdot Q(v-h)$. First, $h$ pins
down the allocation to value $v$: the seller serves $v$ as if its value were the
virtual value $v-h$, so it receives quality $Q(v-h)$, decreasing in $h$. Second,
the leading $h$ is proportional to the (scaled) mass of higher-value buyers who
collect a rent off that allocation, per unit of value-$v$ buyers. Raising $h$
thus widens the base of rent-earning buyers while shrinking the allocation they
earn rent on---the extensive-intensive margin trade-off, compressed into the
single scalar $h$. For the purpose of this example, we will fix $%
v=v_{1}$, and so the relevant variable is the inverse hazard rate $h=h(
v_{1})$. Using the newly introduced notation, we have that total
consumer surplus is: 
\begin{equation*}
\sum_{x\in\{m,m^\prime\}}\sigma (x)U(x)=a^{\ast
}(v_{1})u(v_{1},h_{m}(v_{1})).
\end{equation*}
Figure \ref{fig:cs-2} plots the local information rent function in the same
example, where costs are quadratic and $V = \{1,2\}$.

\begin{figure}[th]
\centering
\begin{subfigure}[b]{0.48\textwidth}
    \centering
    \includegraphics[width=\textwidth]{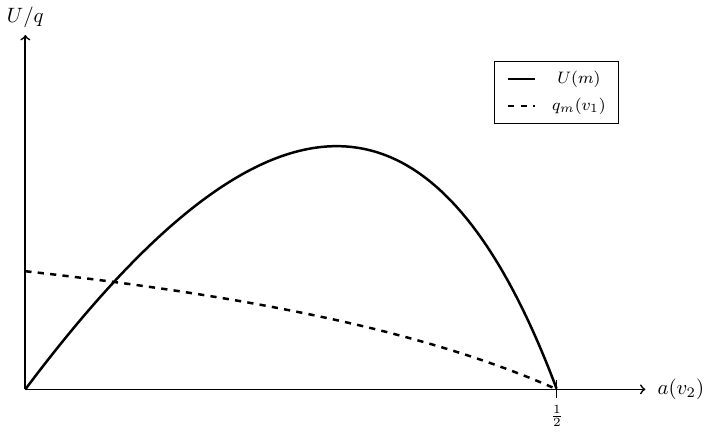}
    \caption{$U$ and $q_m(v_1)$ as a function of $a(v_2)$.}\label{fig:cs-1}
\end{subfigure}
\hfill % Second Subfigure
\begin{subfigure}[b]{0.48\textwidth}
    \centering
    \includegraphics[width=\textwidth]{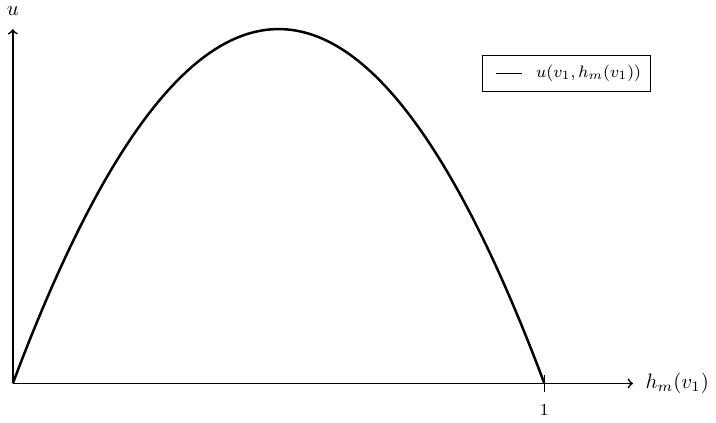}
    \caption{$u(v_1,h)$ as a function of $h_m(v_1)$.}
    \label{fig:cs-2}
\end{subfigure}
\caption{Two ways of finding consumer surplus with $V = \{1,2\}$ and $c(q) =
q^2/2$.}
\label{fig:cs}
\end{figure}

Now the only constraint on the segmentation is given by the constraint %
\eqref{eq:agg-ex}. Namely, the segmentation will increase the concentration
of low-value buyers in market $m$ relative to the aggregate market.
Equivalently, we have that: 
\begin{equation}
h_{m}(v_{1})\leq h^{\ast }(v_{1}).  \label{eq:max-b}
\end{equation}%
The maximum consumer surplus that can be attained with this binary
segmentation is thus: 
\begin{equation}
\max_{h\leq h^{\ast }(v_{1})}\left[ a^{\ast }(v_{1})u(v_{1},h)\right] .
\label{eq:optex}
\end{equation}%
The local information rent $u(v,h)$ encodes the extensive-intensive margin trade-off in
a single function. At $h=0$, there are no high-value buyers, so no rents: $%
u(v,0)=0$. As $h$ increases, more high-value buyers benefit from the low
value's allocation, raising total rents. But simultaneously, the virtual
value $v-h$ falls, so the seller distorts quality downward more
aggressively. When $h\rightarrow v$, the distortion is so severe that the
seller excludes the low value entirely: $Q(v-h)=0$, and rents vanish. Between
these extremes, $u(v,h)$ is hump-shaped, attaining its maximum at an
interior value 
\begin{equation*}
\overline{h}(v)\equiv \argmax_{h}\big[u(v,h)\big].
\end{equation*}%
If the information rents are quasi-concave, the solution to the maximization
problem \eqref{eq:optex} is: 
\begin{equation*}
h=%
\begin{cases}
h^{\ast }(v_{1}), & \text{if }h^{\ast }(v_{1})\leq \overline{h}(v_{1}); \\ 
\overline{h}(v_{1}), & \text{if }h^{\ast }(v_{1})>\overline{h}(v_{1}).%
\end{cases}%
\end{equation*}%
In other words, if the mass of high-value consumers is too low, then the
optimal segmentation is zero segmentation; if the mass of high-value
consumers is too high, then some of them are separated into another market.

\paragraph{Preview of the General Analysis}

The analysis of this section is incomplete, as we did not show that the
restricted binary segmentations are optimal (with binary values, they are).
However, even as we move to continuum values and consider arbitrary
segmentations, the expressions for the consumer surplus in terms of expected
local information rents will remain essentially the same.

The binary setting illustrates the core mechanics, but it is special in two
respects. First, the designer can only \emph{concentrate} buyer values (lower$%
\ h)$, never \emph{dilute} them (raise $h$). With a continuum of values,
dilution becomes possible by creating gaps in the support of some segments,
but it requires concentrating adjacent values first to create the gaps. This
interplay between concentration and dilution is governed by a \emph{%
majorization constraint} that generalizes the pointwise constraint $%
h_{m}(v_{1})\leq h^{\ast }(v_{1})$ in (\ref{eq:max-b}). Second, with binary
values there is only a single inverse hazard rate to optimize. With a
continuum of values, the designer must choose an entire function $h(v)$,
balancing rents across all values simultaneously. Despite these added
dimensions, the structure of the solution is remarkably parallel: consumer
surplus is still the expected local information rent, and the
consumer-optimal segmentation is still found by maximizing $u(v,h(v))$ over
feasible inverse hazard rates.

\section{Consumer-Optimal Segmentations\label{sec:opt-seg}}

We now turn to the general environment with a continuum of buyer values and
unrestricted segmentations. Our goal is to show that the problem of finding
a consumer-optimal segmentation---which is, on its face, an optimization
over distributions of distributions---can be reduced to a far simpler
problem: maximizing a single functional over a single function, subject to a
family of linear constraints. We build toward the main result in four steps.

First, we show that the local information rent representation of consumer
surplus from Section \ref{sec:ex} extends to the general setting, so that
consumer surplus in any regular market equals the expected local information
rent (Section \ref{subsec:screening}). Second, we characterize the limits of
what segmentation can achieve by identifying the precise constraints on
inverse hazard rates that any segmentation must satisfy (Section \ref%
{subsec:redist}). The key insight is that the feasible inverse hazard rates
are governed by a \emph{majorization constraint}---a generalization of the
pointwise constraint $h\leq h^{\ast }$ that arose in the binary case. Third,
we introduce a class of segmentations, called \emph{uniform segmentations},
in which every buyer of the same value faces the same inverse hazard rate in
every segment where they appear, and we argue that it is sufficient to
restrict attention to this class (Section \ref{subsec:uniform}). Finally, we
combine these ingredients to state and prove our main result (Section \ref%
{subsec:opt-seg}). All the illustrations in this section correspond to the
optimal segmentation when a seller has two indivisible goods with different
marginal costs; here, we use them only to illustrate the concepts and
results, rather than provide a detailed analysis.

\subsection{\texorpdfstring{From Binary to Continuum: \newline Consumer Surplus as Expected Local Information Rent}{From Binary to Continuum: Consumer Surplus as Expected Local Information Rent}}

\label{subsec:screening}

In the binary-value setting, we showed that consumer surplus is represented
by the local information rent which depends on the inverse hazard rate. This
representation extends naturally to a continuum of values. The key step is
defining the inverse hazard rate for general distributions that may contain
both continuous and discrete components, density and atoms, and the support
of a market may have gaps.\footnote{It is without loss to restrict to distributions of this form, as markets with a singular-continuous component are irregular; see proof of Lemma \ref{lm:csir}.}

For any market $m$ and any value $v$ in the support of $m$, define the \emph{%
increment} at $v$, denoted by $\Delta _{m}(v)$, as the distance between $v$
and the next highest value present in $m$: 
\begin{equation*}
\Delta _{m}(v)\equiv \inf \big\{t\in \mathrm{supp}(m)\mid t>v\big\}-v.
\end{equation*}%
We define the inverse hazard rate in segment $m$ as: 
\begin{equation}
h_{m}(v)\equiv 
\begin{cases}
\frac{1-F_{m}(v)}{f_{m}(v)}, & \text{if }a_{m}(v)=0\text{ and }%
0<f_{m}(v)<\infty ; \\ 
\Delta _{m}(v)\frac{1-F_{m}(v)}{a_{m}(v)}, & \text{if }a_{m}(v)>0; \\ 
0, & \text{if }a_{m}(v)=f_{m}(v)=0.%
\end{cases}
\label{eq:ihr}
\end{equation}%
The definition of the inverse hazard rate has three components. The first
one is when the distribution is continuous at some $v$, the second one is
when there is an atom at $v$, and the third one is when $v$ is not in
support of $m$, and hence there is a gap around $v$. We define the inverse
hazard rate to be 0 if $v\notin \supp(m)$ because when $v$ is not present in
market $m$, it does not generate information rents.\footnote{%
We could have replaced the value of $h_{m}(v)$ with any finite value when $%
v\not\in \mathrm{supp}(m)$ because it will be integrated over a null
density, but we want to keep it finite to avoid multiplying infinity times
zero.} The economic content of this definition is the same as in the binary
case: the inverse hazard rate measures the mass of buyers above $v$ per unit
of value $v$, scaled by the local spacing of the support. When $m$ is
continuous, it reduces to the familiar ratio of the survival function to the
density.

For any market $m$ and any $v\in \supp(m)$, we define the \emph{virtual value%
} as: 
\begin{equation}
\phi_{m}(v)\equiv v-h_{m}(v).  \label{eq:vv-cont}
\end{equation}%
This definition incorporates both the case of continuous and discrete
distributions. As in Section \ref{sec:ex}, we denote by $Q$ the inverse of
the marginal cost function (see \eqref{eq:q}). A market is regular if for
every $v$, the quality that the buyer with value $v$ consumes $q_{m}(v)$
under the profit-maximizing menu is the supply evaluated at the
corresponding virtual value: 
\begin{equation*}
q_m(v)=Q(\phi _{m}(v)).
\end{equation*}
We can restrict attention to segmentations that place positive weight only
on regular markets.

Two implications of regularity recur below. (i) In a regular market, the
profit-maximizing menu involves no ironing: value $v$ consumes exactly $%
q_{m}(v)=Q(\phi _{m}(v))$, which is what makes the local information rent
representation of Lemma \ref{lm:csr} available. When $m$ has full support,
regularity is equivalent to the virtual value $\phi _{m}$ being non-decreasing.
(ii) Regularity requires an atom at the lower endpoint of any gap in the
support. Roughly, if instead the density continued up to the gap, the quality allocation
would jump downward at the top of that density, violating monotonicity.

\begin{lemma}[Consumer Surplus in Irregular Markets]
\label{lm:csir} For every market $m^\prime$, there exists a segmentation $%
\sigma$ of $m^\prime$ into regular markets such that 
\begin{equation}  \label{eq:u-irreg}
U(m^\prime) = \int_{\Delta V} U(m)\ d\sigma(m).
\end{equation}
\end{lemma}

This lemma shows that any irregular market can be segmented into regular
markets while preserving the same profit-maximizing pricing. Our notion of
regularity requires that the ironed virtual values agree with the original
ones (details in proof of Lemma \ref{lm:csir}). When $m$ is full support, it
reduces to having monotone virtual values. Hence, this statement uses no
properties of how consumer surplus is determined, but simply shows that one
can segment irregular markets into regular ones in an inconsequential way.
Thus, for the remainder of the section, we will only consider segmentations
supported on regular markets, which is sufficient to characterize the
maximum consumer surplus achievable with segmentations. It will turn out
that \eqref{eq:u-irreg} can be strengthened to a strict inequality, that is,
segmentations supported on irregular markets are strictly suboptimal; we
relegate the details of this argument to the proof of Theorem \ref{thm:main}
in the appendix.

The \emph{local information rent} is defined the same way as \eqref{eq:su},
which we display again: 
\begin{equation*}
u(v,h)\equiv h\cdot Q(v-h).
\end{equation*}%
As in the binary value case, we can express the consumer surplus in regular
markets as the expected local information rent via the envelope theorem as it is standard.

\begin{lemma}[Consumer Surplus in Regular Markets]
\label{lm:csr} In every regular market $m$, the consumer surplus is given
by: 
\begin{equation}
U(m) = \int_{\underline{v}}^{\overline{v}} u(v,h_m(v))\ dF_m(v).
\label{eq:u-reg}
\end{equation}
\end{lemma}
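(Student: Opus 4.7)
The plan is to derive the identity from the standard envelope characterization of screening rents in the discrete-type setting, taking care of possible support gaps in $x$. I would enumerate $\supp(x)$ as $v_{k_1} < v_{k_2} < \cdots < v_{k_m}$. The preceding supply lemma gives $q_{k_i}^x = Q(\phi_{k_i}^x)$, and regularity of $x$ means that this schedule is nondecreasing in $i$. Consequently, the implementing tariff has the downward local incentive constraints binding between consecutive \emph{supported} values, and individual rationality binds at $v_{k_1}$ (trivially so if $q_{k_1}^x = 0$). The standard telescoping argument then yields
\[ U(v_{k_i},p^x) \;=\; \sum_{j=1}^{i-1} \bigl(v_{k_{j+1}} - v_{k_j}\bigr)\, q_{k_j}^x \;=\; \sum_{j=1}^{i-1} \Delta_{k_j}^x\, q_{k_j}^x. \]

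Second, I would weight by $x_{k_i}$, swap the order of summation, and collect the tail probability $\sum_{i>j} x_{k_i}$ at each index $j$. By the definition \eqref{eq:d} of the demand function, this tail mass equals $D_{k_j+1}^x$, giving
\[ U(x) \;=\; \sum_{j=1}^{m-1} \Delta_{k_j}^x\, D_{k_j+1}^x\, q_{k_j}^x. \]
The definition \eqref{eq:hr} of the inverse hazard rate then yields $\Delta_{k_j}^x\, D_{k_j+1}^x = x_{k_j}\, h_{k_j}^x$, and the supply lemma substitutes $q_{k_j}^x = Q(v_{k_j} - h_{k_j}^x)$. Therefore $h_{k_j}^x\, q_{k_j}^x = u_{k_j}(h_{k_j}^x)$ by the definition \eqref{eq:lu} of the local information rent, so that $U(x) = \sum_{j=1}^{m-1} x_{k_j}\, u_{k_j}(h_{k_j}^x)$. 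To obtain the stated sum over $k \in \{1,\ldots,K-1\}$, I would extend the range of summation: the $k = k_m$ term vanishes because $\Delta_{k_m}^x = 0$ (so $h_{k_m}^x = 0$ and $u_{k_m}(0)=0$), and every omitted index contributes zero through the factor $x_k = 0$, adopting the convention $h_k^x = 0$ whenever $v_k \notin \supp(x)$ to keep $u_k(h_k^x)$ well defined.

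The main obstacle I expect is the bookkeeping around support gaps: the envelope identity must use the increment $\Delta_{k_j}^x$ to the next \emph{supported} value $v_{k_{j+1}}$, rather than a single step $v_{k_j+1}-v_{k_j}$ along the grid $V$. Verifying this requires recognizing that in a regular market only the downward local IC constraints \emph{between supported types} bind, which is exactly what regularity of the virtual value sequence $\phi^x_{k_i}$ guarantees. Without regularity, one would need to iron the allocation and the envelope step would be more delicate; with regularity, the above derivation is essentially automatic once the correct accounting of support gaps is in place.
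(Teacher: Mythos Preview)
Your proof is correct. The paper does not provide an explicit proof of this lemma, treating it as a direct consequence of the Mussa--Rosen characterization; the binary case is worked out in Section~\ref{sec:ex}, and Lemma~\ref{lm:csr} is simply the $K$-type generalization. Your envelope/telescoping derivation, with the bookkeeping over the supported indices $k_1<\cdots<k_m$ and the identification $\sum_{i>j}x_{k_i}=D^x_{k_j+1}$, is exactly the natural way to fill in the details. One minor point of alignment: when extending the sum to all $k$, the paper later (in the proof of Theorem~\ref{thm:main}) adopts the convention $h_k^x=v_k$ rather than $h_k^x=0$ for $v_k\notin\supp(x)$; either choice makes the term $x_k u_k(h_k^x)$ vanish, so your argument is unaffected.
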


The local information rent $u(v,h)$ captures the total rents generated by
value $v$ per unit mass of buyers, so when multiplied by the density of
buyers, it gives the total consumer surplus generated by $v$. The two
roles of the inverse hazard rate identified in the binary case carry over
unchanged: the quality consumed by value $v$ is $q_{m}(v)=Q(v-h_{m}(v))$, which
depends on the market only through $h_{m}(v)$, while $h_{m}(v)$ is itself the
mass of higher values benefiting from that allocation per unit of value $v$. This
is why the entire segmentation problem can be recast in the single function $h$:
fixing the inverse hazard rate at each value determines both the allocation and
the mass benefiting from it, and hence consumer surplus. We emphasize
that the local information rents do not identify which values are \emph{%
receiving} the rents, but rather which values are \emph{generating} the
rents. Value $v$ generates rent for higher values because the seller offers
a quality level at price $v$, which in turn generates rent for all higher
values that can purchase this quality level at a price below their own
valuation. Hence, throughout this section, we consider the following
problem: 
\begin{equation}  \label{reg:maxim}
\begin{split}
&\max_{\sigma\in \operatorname{MPS}(m^\ast)}\int_{\Delta V}\int_{\underline
v}^{\overline v} u(v,h_m(v))\ dF_m(v)\ d\sigma(m) \\
&\text{subject to: every }m\in\mathrm{supp}(\sigma)\text{ is regular,}
\end{split}%
\end{equation}
which delivers the same value as \eqref{eq:u-opt}.

\subsection{\texorpdfstring{What Can
Segmentation
Achieve?
\newline
Concentration, Dilution and
Majorization}{What
Can
Segmentation
Achieve? Concentration, Dilution and
Majorization}}

\label{subsec:redist}

Our expression for consumer surplus in regular markets \eqref{eq:u-reg}
invites us to consider how we might change the inverse hazard rate $h$ to
increase local information rents. In general, higher $h$ means more
high-value buyers benefit because it raises the mass of buyers above $v$ per
unit of value $v$ (as measured by the inverse hazard rate). Simultaneously,
however, it means lower quality for value $v$, so distortions and prices
increase with the inverse hazard rate. These opposing forces imply that
local information rents are maximized at an interior inverse hazard rate $%
\overline{h}(v)$, defined by 
\begin{equation}
\overline{h}(v)\equiv \argmax_{h\in \mathbb{R}_{+}}\big[u(v,h)\big].
\label{eq:maxh}
\end{equation}%
If multiple maximums exist, we take the lowest one.

\paragraph{Two Tools: Concentration and Dilution}

The binary-value analysis identified a single tool available to the
segmentation designer: \emph{concentration}, gathering buyers of a given
value into one segment, thereby lowering their inverse hazard rate. With more
values, a second tool becomes available: \emph{dilution}, raising the inverse
hazard rate of a given value by placing that value in segments where
immediately higher values are \emph{missing}. When a given value is spread
out over more markets than the immediately higher values, a gap opens in the
support above value $v$, which increases the increment $\Delta _{m}(v)$ and
hence the inverse hazard rate (\ref{eq:ihr}).

Dilution is economically quite different from concentration. Concentrating a
value is straightforward: place more of those buyers in some segments and
remove them from others. But diluting a value---raising its inverse hazard
rate---requires that some values just above $v$ have been \emph{removed} from
the segment. And those values can only be removed if they have been
concentrated elsewhere. In other words, dilution at one value requires
concentration at adjacent higher values.

\paragraph{The Majorization Constraint}

For this reason, concentration is valuable both as a tool in its own right
to modify the inverse hazard rate, and to enable dilution of lower values.
Our key challenge, then, is to characterize the limits of what inverse
hazard rates are achievable through segmentation. For this purpose, let $%
h:V\rightarrow \mathbb{R}_{+}$ be an arbitrary function. We say $h^{\ast }$ 
\emph{majorizes} $h$ and write $h^{\ast }\succ h$ if: 
\begin{equation}
\int_{v}^{\overline{v}}h^{\ast }(t)dF^{\ast }(t)\geq \int_{v}^{\overline{v}%
}h(t)dF^{\ast }(t),  \label{eq:maj-00}
\end{equation}%
for all $v\in V$. We refer to \eqref{eq:maj-00} as the \emph{majorization}
constraint, as it is closely related to the familiar notion of majorization
(with the distinction that $h$ here need not be monotone in $v$). We define
the slack in the majorization constraint at $v$ as $E_{h}(v)$:
\begin{equation}
E_{h}(v)\equiv \int_{v}^{\overline{v}}h^{\ast }(t)-h(t)dF^{\ast }(t)\geq 0.
\label{eq:maj-0}
\end{equation}%
In Figure \ref{fig:maj-a} we plot the inverse hazard rate of an aggregate
market and an inverse hazard rate that satisfies the majorization constraint
(labeled as $h_\sigma$). We also plot the slack in the majorization
constraint $E$ and observe that its slope changes sign depending on whether
the hazard rate is above or below the aggregate market.

\begin{figure}[th]
\centering
\begin{subfigure}[b]{0.48\textwidth}
    \centering
    \includegraphics[width=0.85\textwidth]{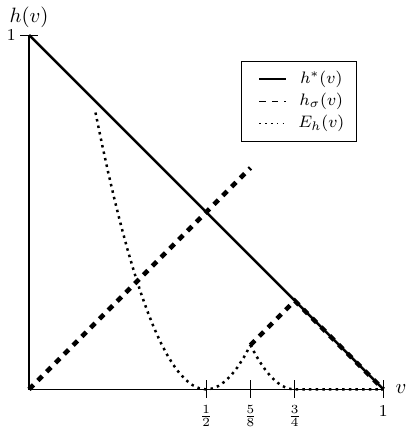}
        \caption{Inverse hazard rate satisfying majorization constraint, and that of the aggregate market.}
    \label{fig:maj-a}
\end{subfigure}\hfill % Second Subfigure
\begin{subfigure}[b]{0.5\textwidth}
    \centering
    \includegraphics[width=\textwidth]{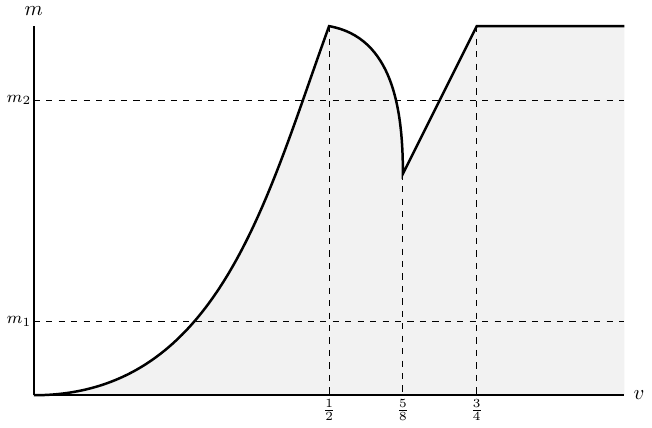}
    \caption{Support of a segmentation implementing $h_\sigma$ shown in Figure \ref{fig:maj-a}.}\label{fig:seg}
\end{subfigure}
\caption{Inverse hazard rates and a segmentation that implements it.}
\label{fig:maj}
\end{figure}

For any segmentation $\sigma $, let $\Sigma (v,m)$ be the induced joint
distribution over values and markets, and $\Sigma _{v}(m)$ the marginal
distribution of $\Sigma $ over markets, conditional on value. Define the
average inverse hazard rate $h_{\sigma }$ as: 
\begin{equation}
h_{\sigma }(v)\equiv \int_{\Delta V}h_{m}(v)\ d\Sigma _{v}(m).
\label{eq:h-avg}
\end{equation}%
This is, in general, not the same as the average value of $h_{m}(v)$ over $%
\sigma (m)$, unless $h_{m}(v)$ is constant across all $m\in \supp(\sigma )$.
If $F_{m}$ were continuous for all $m\in \supp(\sigma )$, this would
simplify to the weighted average value of $h_{m}(v)$,
\begin{equation*}
h_{\sigma }(v)=\int_{\Delta V}h_{m}(v)\frac{f_{m}(v)}{f^{\ast }(v)}\ d\sigma
(m),
\end{equation*}%
where more weight is given to markets that have more buyers of value $v$. The
integral expression \eqref{eq:h-avg} extends this to the case where $\sigma $
contains distributions with atoms.

This notion of average hazard rate formalizes our intuition that diluting
buyers of value $v$ requires gaps to have been created above $v$.
Specifically, for all $v$, 
\begin{equation}  \label{eq:h-sigma}
E_h(v) = \underset{\{m:v\not\in\mathrm{supp}(m)\}}{\int} \Delta_m(v)
(1-F_m(v))\ d\sigma(m).
\end{equation}
The right-hand side (rhs) integrates the ``missing demand'' in each market
where $v$ is missing (that is, $v\not\in\mathrm{supp}(m)$), weighted by the
distance to the nearest value in the support. Since it is weakly positive by
construction, we get that the majorization constraint \eqref{eq:maj-0} is
satisfied.

\begin{proposition}[Inverse Hazard Rates in Segmentations]
\label{prop:avg-h} For any segmentation $\sigma$ of $m^\ast$, $h_\sigma
\prec h^\ast$.
\end{proposition}

In the next section, we will use a partial converse of this result: given $%
h\prec h^{\ast }$, we can construct a segmentation $\sigma $ such that $%
h_{\sigma }=h$, provided that $h$ satisfies certain technical conditions.
Thus, in a strong sense, the majorization constraint captures the maximal
extent to which the redistribution of buyers can affect the inverse hazard
rates, which is itself sufficient for capturing the outcomes of segmentation.

The contrast with the binary case is illuminating. There, the only feasible
change was $h\leq h^{\ast }$ pointwise; the designer could only concentrate
(lower $h$), never dilute (raise $h$). With a continuum of values, the
majorization constraint is \emph{weaker} than the pointwise constraint: it
permits the inverse hazard rate to \emph{exceed} $h^{\ast }$ at some values,
provided it falls sufficiently below $h^{\ast }$ at other (higher) values to
maintain the cumulative inequality. This additional flexibility is precisely
what dilution provides, and it can strictly expand the achievable consumer
surplus.

\subsection{Uniform Segmentations: Equalizing Returns across Segments\label%
{subsec:uniform}}

The local information rent at value $v$ describes the contribution of every
unit of buyers of value $v$ to the consumer surplus in a given segment. If
the same value makes different contributions in different segments, then we
might want to redistribute buyers of this value to wherever they can make the
largest contribution. Thus, we should expect that the optimal segmentation
equalizes the returns of given value $v$ across all segments. But, this
suggests that all segments should have a uniform inverse hazard rate for a
given value $v$, meaning every value generates the same inverse hazard rate
in every segment that the value is present.

Formally, we say that $\sigma$ is a \emph{uniform segmentation} if, for
almost every $m \in \supp(\sigma)$ and $v \in \supp(m)$, 
\begin{equation}  \label{eq:equal-h}
h_m(v) = h_\sigma(v).
\end{equation}
Segments within a uniform segmentation share a common shape along their
shared supports that may differ from the aggregate market. In Figure \ref%
{fig:uniform-seg-a}, we plot two distributions that have the same inverse
hazard rates (which is the one in Figure \ref{fig:maj-a}). The aggregate
market $F^{\ast }$ is drawn only for reference. Segments $%
m_{1}$ and $m_{2}$ are two members of the continuum making up the segmentation,
so they need not---and in general do not---average to $F^{\ast}$;
only the full collection does. Figure \ref{fig:seg} shows how $m_{1}$ and $m_{2}$
sit among all the segments. Market $m_1$ is a
gapless distribution, while market $m_2$ has a gap (flat segment) and an
atom (discontinuous jump) that are in exact proportion so that the discrete
inverse hazard rate is the same as market $m_1$. Even though both
distributions have the same inverse hazard rates (over their overlapping
support) they do look quite different, illustrating the richness that exists
even among uniform segmentations. In Figure \ref{fig:uniform-seg-b}, we plot
the corresponding densities, where we again observe similar shapes in their
supports.

\begin{figure}[th]
\centering
\begin{subfigure}[b]{0.48\textwidth}
    \centering
    \includegraphics[width=\textwidth]{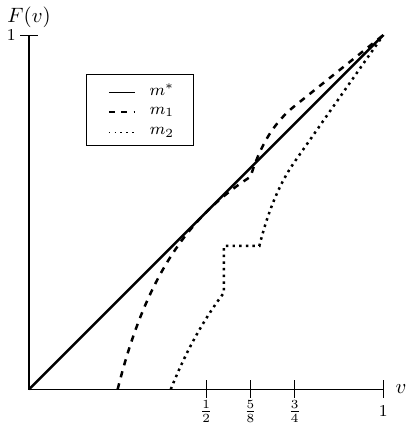}
    \caption{Cdf of aggregate market and segments.}\label{fig:uniform-seg-a}
\end{subfigure}\hfill % Second Subfigure
\begin{subfigure}[b]{0.48\textwidth}
    \centering
    \includegraphics[width=\textwidth]{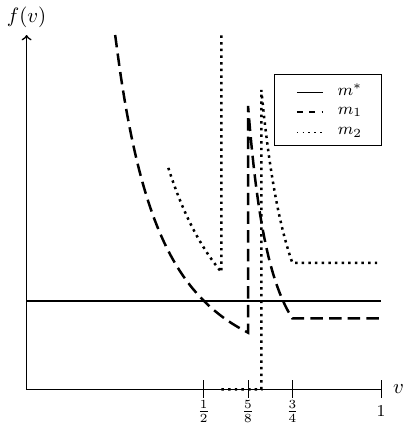}
    \caption{Densities of the same markets.}
    \label{fig:uniform-seg-b}
\end{subfigure}
\caption{Two segments (out of a continuum) in a uniform
segmentation of $\mathcal{U}[0,1]$. The aggregate market is shown for reference
only; the two segments alone do not average to it.}
\label{fig:uniform-seg}
\end{figure}

A uniform segmentation $\sigma $ will consist only of regular markets if and
only if $v-h_{\sigma }(v)$ is non-decreasing. Every uniform segmentation $%
\sigma $ that consists only of regular markets generates consumer surplus 
\begin{equation*}
\int_{\Delta V}\int_{\underline{v}}^{\overline{v}}u(v,h_{m}(v))\ dF_{m}(v)\
d\sigma (m)=\int_{\underline{v}}^{\overline{v}}u(v,h_{\sigma }(v))\ dF^{\ast
}(v),
\end{equation*}%
where here we are using expression \eqref{eq:u-reg} to express the consumer
surplus as the expected local information rents and \eqref{eq:equal-h} to
replace the hazard rates. We can maximize this expression over $h_{\sigma
}\prec h^{\ast }$ to obtain an upper bound on how much consumer surplus a
uniform segmentation supported only on regular markets can generate: 
\begin{equation}
\sup_{h\prec h^{\ast }}\int_{\underline{v}}^{\overline{v}}u(v,h(v))\
dF^{\ast }(v)  \label{eq:ub}
\end{equation}%
In order for this upper bound to be tight, there must be an $h$ which
achieves the maximum of \eqref{eq:ub}, and a uniform segmentation
implementing this $h$ which is supported only on regular markets. All these
conditions are satisfied.

\begin{proposition}[Optimal Uniform Segmentation]
\label{prop:unif-seg-ub} There exists an $h$ that attains the supremum of %
\eqref{eq:ub}, and a uniform segmentation $\sigma$ implementing it supported
only on regular markets.
\end{proposition}

The propositions show that the relevant (optimal) inverse hazard rate that
satisfies the majorization constraint can be implemented by a uniform
segmentation. We prove the existence of such segmentations constructively,
where markets are indexed by $m\in [0,1]$ and a function $s(v)$ marks the
upper boundary of the support: only market indices below $s(v)$ contain
value $v$. Hence, markets have nested supports. In Figure \ref{fig:seg} we
illustrate the support of the segmentation that implements inverse hazard
illustrated in Figure \ref{fig:maj-a}; the $x-$axis represents values $v\in[%
0,1]$, $y$-axis represent a market index $m\in[0,1]$, and the shaded area is
the pairs of value-market ($v$,$m$) that have positive support. In the $y$%
-axis, we show where markets $m_1,m_2$ are (illustrated in Figure~\ref%
{fig:uniform-seg-a}), and see that the gray shaded area marks the support of
these distributions. Whenever the upper boundary of the shaded area has a
positive (negative) slope, we get concentration (dilution), corresponding to
the case where the inverse hazard rate is smaller (larger) than in the
aggregate market (as illustrated in Figure \ref{fig:uniform-seg-b}). At the
upper boundary of the segment with negative slope (between 1/2 and 5/8), a
market has an atom, which significantly shrinks the mass of values available
to distribute across other markets; hence, even if a value is not
distributed across all the markets, the inverse hazard rate is higher than
in the aggregate market when the slope of the upper boundary is negative.

The existence of an optimal $h$ is non-trivial to prove, but the arguments
are technical, so we relegate them to the appendix. From an economically
substantive perspective, the more important aspect of this proposition is
that the optimal $h $ always satisfies regularity: $v-h(v)$ is
non-decreasing (alternatively, $h^{\prime }(v)\leq 1$). This means the
segments in the optimal uniform segmentation are themselves regular, which
has a clear economic logic.

Recall from the binary case that the local information rent $u(v,h)$ is
hump-shaped in $h$: increasing $h$ raises the extensive margin of rents but increases
distortions. A necessary condition for $h$ to be optimal is that the
marginal local information rent is that 
\begin{equation}
\frac{\partial }{\partial h}u(v,h(v))  \label{eq:d-gaph}
\end{equation}
is non-decreasing in $v$. If this failed---if the marginal value of raising $%
h$ were higher for some low value than for a nearby high value---the designer
could concentrate the high value (lowering their $h$) to create slack for
diluting the low value (raising their $h$), generating a net increase in
consumer surplus. Now, because $u$ is ``not too supermodular'' (in a sense
formalized in the appendix), this condition can be used to show that higher
values cannot be associated with too much higher hazard rates (formally, $%
h^{\prime }(v)\leq 1$ in any solution to \eqref{eq:ub}).

\subsection{The Main Result: The Value of Segmentation\label{subsec:opt-seg}}

Proposition \ref{prop:unif-seg-ub} means that the maximum of \eqref{eq:ub}
is a lower bound on the consumer surplus achievable by a segmentation. To
show that this bound is tight, we need to rule out that higher values can be
achieved by non-uniform segmentations. We do this with a short technical
argument, which tracks the intuition provided at the start of Section \ref%
{subsec:uniform}: if buyers of the same value are contributing
differentially to the local information rent, they should be reallocated to
segments where they contribute the most. If $u$ were concave, this would
follow from Jensen's inequality; although $u$ is not concave, we show that
the second-order condition for the optimal $h$ is strong enough to apply
this argument nonetheless.

\begin{theorem}[Value of Segmentation]
\label{thm:main} The maximum consumer surplus attained by segmentation is: 
\begin{equation}  \label{eq:main-thm}
\max_{\sigma \in \operatorname{MPS}(m^\ast)} \left[\int_{\Delta V} U(m)\ d\sigma(m) %
\right] = \max_{h\prec h^\ast}\ \int_{\underline v}^{\overline v}u(v,h(v))\
dF^\ast(v).
\end{equation}
Furthermore, every segmentation achieving this value is a uniform
segmentation implementing some inverse hazard rate $h$ which solves the rhs of %
\eqref{eq:main-thm}.
\end{theorem}

Theorem~\ref{thm:main} is the central result of the paper. We highlight
several aspects of its content and implications.

\paragraph{Reduction of Complexity}

The original problem, given by the lhs of \eqref{eq:main-thm}, requires
maximizing over segmentations $\sigma \in \Delta (\Delta V)$: distributions
over the infinite-dimensional space of all probability distributions on $V$.
The reduced problem, given by the rhs of ~\eqref{eq:main-thm}, requires maximizing a single
functional over a single function $h:V\rightarrow \mathbb{R}_{+}$, subject
to the family of linear majorization constraints~\eqref{eq:maj-00}. This is
a standard optimal control problem, amenable to Karush-Kuhn-Tucker (KKT)\
methods and often yielding closed-form solutions. The reduction is thus both
conceptually clarifying and computationally powerful.

\paragraph{Relation to Extreme Points and Majorization}

Theorem \ref{thm:main} contrasts with the extreme-point approach to
majorization problems \citep{klms21}. The original problem \eqref{eq:u-opt} maximizes an objective which is linear in $\sigma$ over the convex set
$\operatorname{MPS}(m^{\ast})$, so an optimum is attained at an extreme point (an
``extremal'' segmentation into regular markets). However, the set $\operatorname{MPS}(m^\ast)$ is taken over $\Delta(\Delta V)$, not $\Delta V$, and hence lacks the usual tractable one-dimensional reduction to majorization. By contrast, in our simplified problem \eqref{eq:main-thm}, the objective in terms of inverse hazard rates $\int_{\underline{v}}^{\overline{v}}u(v,h(v))\,dF^{\ast }(v)$
is \emph{not} linear (nor convex) in $h$, and the optimal inverse hazard rate is therefore generically interior. Thus, the
standard tools for maximizing a convex functional over a majorized set
do not apply.

\paragraph{Uniform Quality across Segments}

Since $h(v)$ is the same in every segment containing value $v$, and
the quality allocation in a regular market is $q_{m}(v)=Q(v-h_{m}(v))$, it
follows that every buyer of value $v$ consumes the \emph{same quality} in
every segment where they appear. This is a strong and perhaps
counterintuitive property: even though the monopolist could offer different
products to the same buyer in different segments, the consumer-optimal
segmentation ensures uniform quality provision. What varies across segments
is only the \emph{price} for the quality.

\paragraph{Comparison with the Binary Case}

In the binary setting, the consumer surplus maximization reduced to $\max_{h
\leq h^\ast(v_1)} u(v_1, h)$, a single-variable optimization with a
pointwise constraint. Theorem~\ref{thm:main} reveals that the natural
generalization is \emph{not} the pointwise problem $\max_{h \leq h^\ast}
\int u(v,h(v)) \, dF^\ast(v)$, but rather the majorization-constrained
problem~\eqref{eq:main-thm}. The majorization constraint is strictly weaker
than the pointwise constraint, because it permits the designer to raise the
inverse hazard rate at low values by concentrating higher values. This
additional flexibility can strictly improve consumer surplus, and it arises
from the dilution mechanism that was absent in the binary case. In many
environments, however---including those studied in Section~\ref{sec:conv-seg}%
--- dilution is not optimal. When dilution is not used, the solution
coincides with the one obtained if, instead of having the majorization
constraint, the inverse hazard rate were a pointwise bound on the markets'
inverse hazard rate $h(v)\leq h^\ast(v)$.

\paragraph{Uniqueness}

The second part of Theorem~\ref{thm:main} establishes that \emph{every}
optimal segmentation is uniform. We prove this by showing that irregular
markets are strictly suboptimal (hence, strengthening Lemma~\ref{lm:csir}),
and non-uniform segmentations are also strictly suboptimal by the concavity
argument above. Thus, the optimal segmentation is essentially unique in
terms of its economic content (the inverse hazard rate function $h$ and the
resulting quality allocations), even though multiple uniform segmentations
may implement the same inverse hazard rate $h$.

\section{Convex Segmentations\label{sec:conv-seg}}

Theorem \ref{thm:main} reduces the segmentation problem to maximizing a single functional
over a single function---a remarkable simplification. Yet the solution it
delivers can still be intricate: optimal inverse hazard rates may require
segments with gaps and atoms, even in simple environments with finitely many
goods%(see the working paper version, \cite{behw24})
. In this section, we
show that a widely used regularity condition---on the demand as well as on
the supply function---yields a solution of striking simplicity.

The optimal segmentation turns out to consist of ``convex'' segments: nested
intervals all sharing the same upper bound $\overline{v}$, differing only in
how far down in the support of the value distribution they reach. These are segmentations
where, in contrast to Figure \ref{fig:seg}, the slope of the upper boundary
is always increasing. Furthermore, the shape of all optimal segments can be
expressed in terms of a single transparent interaction between demand and
supply elasticities.

We develop these results in three steps. First, we introduce the two
regularity conditions and interpret them economically. Second, we
characterize the optimal inverse hazard rates and the associated pricing
structure. Next, we specialize to iso-elastic costs and uncover a tight
connection between supply and demand elasticities. Finally, we use these
tools to provide a sharp, easily interpretable condition under which no
segmentation can help consumers.

\subsection{Log-Concave Demand and Supply\label{subsec:con}}

The analysis in this section relies on a symmetric assumption for demand and
supply. We assume that the aggregate demand and supply function are
log-concave. Specifically, the demand in the aggregate market, the survival
function $1-F^{\ast }(v)$, is assumed to be log-concave in $v$: 
\begin{equation}
\frac{d^{2}}{dv^{2}}\log (1-F^{\ast }(v)) \leq 0.  \label{eq:mhr}
\end{equation}%
This is equivalent to the monotone (inverse) hazard rate condition: the
assumption that $h^{\ast }(v)$ is non-increasing in $v$. It is satisfied by
most commonly used distributions, including the uniform, normal, logistic,
and exponential families (see \cite{babe05} for a comprehensive catalog).

Similarly, the supply function is also assumed to be log-concave in $v$:
\begin{equation}
\frac{d^{2}}{dv^{2}}\log (Q(v))\leq 0.  \label{eq:conv-mc}
\end{equation}%
We can also provide this condition in terms of the cost function: 
\begin{equation*}
\frac{c^{\prime \prime \prime }(q)q}{c^{\prime \prime }(q)}\geq -1.
\end{equation*}%
This condition requires the marginal cost to be convex or not too concave,
and accommodates a wide range of cost structures, including the entire
family of power cost function studied in Section \ref{subsec:iso-cost}
(which includes the single unit demand model as a limiting case).

What do these conditions have in common? They require that either demand or
supply elasticity does not grow faster than proportional with price. The
log-concavity condition appears prominently in the pass-through literature
in international trade and industrial organization, see \cite{bupf83}, \cite%
{wefa13} and \cite{mrne17}, where it implies that the monopolist (on the
demand side) or the monopsonist (on the supply side) absorbs at least half
of any cost increase. This shared \textquotedblleft
distortion\textquotedblright\ property is what makes the two conditions work
together so powerfully. It ensures that low-value buyers---who are already
the most distorted in the aggregate market---are also the ones who stand to
gain the most from segmentation. Conversely, high-value buyers, whose
allocations are already close to efficient, cannot be profitably helped by
rearranging the market.

\subsection{Optimal Hazard Rates and Pricing}

Under these two conditions, the solution to the majorization problem (\ref%
{eq:main-thm}) takes a clean, two-regime form. We recall from the earlier
analysis that $\overline{h}(v)$ denotes the inverse hazard rate that
maximizes local information rents at value $v$ (see (\ref{eq:maxh})). Under
log-concavity of supply, this maximizer is uniquely defined and strictly
increasing in $v$: higher-value buyers can tolerate more distortion before
their rents peak, because their baseline surplus is larger. Meanwhile,
log-concavity of demand makes the aggregate inverse hazard rate $h^{\ast
}(v) $ decreasing in $v$: the aggregate inverse hazard rate falls with $v$,
since higher values are progressively rarer relative to the mass just below
them. These two monotonicity properties guarantee that $h^{\ast }$ and $%
\overline{h}$ cross exactly once. Define the threshold: 
\begin{equation*}
\widehat{v}\equiv \min \big\{v\mid h^{\ast }(v)\leq \overline{h}(v)\big\}
\end{equation*}%
to be the unique threshold value at which the aggregate inverse hazard rate $%
h^{\ast }(v)$ exceeds the rent-maximizing level $\overline{h}(v)$.

\begin{proposition}[Convex Segmentations]
\label{prop:cmv} With log-concave demand and supply, the consumer-optimal
segmentation displays:
\begin{equation}
h(v)=%
\begin{cases}
\overline{h}(v), & \text{if }v<\widehat{v}; \\ 
h^{\ast }(v), & \text{if }v\geq \widehat{v}.%
\end{cases}
\label{fcx}
\end{equation}
\end{proposition}

The logic is intuitive. For buyers above the threshold $\widehat{v}$, the
aggregate inverse hazard rate is already at or below the rent-maximizing
level---there are \textquotedblleft too few\textquotedblright\ high-value
buyers relative to these values for the designer to improve their rents.
Since local information rents are increasing in $h$ at these values, the
designer would like to raise $h$ further, but the majorization constraint is
binding: no feasible reallocation of buyers can push $h$ higher. So these
values are left untouched.

For buyers below $\widehat{v}$, the situation is reversed. The aggregate
market has \textquotedblleft too many\textquotedblright\ higher-value buyers
relative to these low values---the inverse hazard rate $h^{\ast }(v)$ exceeds
the rent-maximizing level $\overline{h}(v)$. This means local information
rents are decreasing in $h$, and the consumer surplus can improve by
concentrating low-value buyers into fewer segments, driving down their
inverse hazard rate to the rent-maximizing level. Crucially, no dilution is
ever needed: the designer uses only the simpler of the two tools identified
earlier, concentrating values rather than creating gaps in segment supports.
This is precisely because log-concave supply makes local information rents
concave in $h$, while the log-concave demand ensures that the marginal local
informational rents are monotonically ordered across values.

The two regularity conditions not only simplify finding the optimal inverse
hazard rates---they also yield an especially clean segmentation structure.

\begin{theorem}[Convex Segmentations and Pricing]
\label{thm:conv-seg} With log-concave demand and supply, there is a
consumer-optimal segmentation such that:

\begin{enumerate}
\item each market $m$ is absolutely continuous with support $[v_{m},%
\overline{v}] $ for some $v_{m}\leq \widehat{v}$, and

\item the pricing differs across markets only by the base price: 
\begin{equation*}
p_{m}(q)=p(q)+T_{m},
\end{equation*}%
where $p(q)$ does not depend on $m$ and $T_{m}$ increases in $v_{m}$ but
does not depend on $q$.
\end{enumerate}
\end{theorem}

Part 1 says that segments are nested convex intervals, all sharing the same
upper bound $\overline{v}$ but with lower bounds that vary across segments.
As concentration increases, fewer low-value buyers remain, and the lowest
values \textquotedblleft drop out\textquotedblright\ first. We refer to these
segmentations as \emph{convex segmentations}, as their supports are nested
convex intervals; they are the natural generalization of the direct
segmentations of \cite{bebm15}.

Part 2 follows from the envelope theorem. By Theorem \ref{thm:main},
every buyer of value $v$ consumes the same quality in every segment in which it
appears. Within each segment,
the indirect utility of value $v$ has the same derivative $q(v)$ at every point in its support. Since the supports of different segments are nested convex intervals, the indirect utilities of any two segments differ only by a constant. The payment of each value is therefore the same up
to a quality-independent base price: the monopolist does not redesign its
product line across segments, but merely adjusts the base price of the lowest
quality offered. What varies is only the level of information rents---because $%
T_{m}$ increases in $v_{m}$, buyers in more concentrated segments (lower $v_{m}$)
face a lower base price and enjoy higher rents.

Alternatively, we can associate each nested segment $[v_{m},\overline{v}]$
with a nested menu as the qualities below $Q(v_{m}-h( v_{m})) $ fail to sell
at the base price $T_{m}$. In this interpretation, the monopolist offers
nested menus all sharing the same efficient upper bound quality $Q( 
\overline{v}) ,$ differing only in how far down the menu is extended and in
the price of the lowest offered quality, the base price.

The theorem shows that the optimal segmentation is easy to construct. Each
segment $m$ has a unique support, which is of the lower censored form $%
[v_{m},\overline{v}]$. Within each segment, the inverse hazard rate of values
below $\widehat{v}$ has been lowered to $\overline{h}$, while the values
above $\widehat{v}$ are unchanged. The key simplifying aspect is optimal
segmentation increasingly concentrates low values more than high values. So,
as low values are concentrated more, we \textquotedblleft run
out\textquotedblright\ of consumers sequentially from $\underline{v}$
upwards, creating segments with support shrinking from the bottom. The
combination of \eqref{eq:mhr} and \eqref{eq:conv-mc} ensures that we run out
of consumers sequentially.

\subsection{The Elasticity Nexus:\ The Case of Iso-Elastic Cost\label%
{subsec:iso-cost}}

To make the interplay between supply and demand elasticities fully
transparent, we now specialize to the family of power cost functions: 
\begin{equation}
c(q)=q^{\gamma }/\gamma ,  \label{eq:iso-cost}
\end{equation}%
where $\gamma >1$. We also refer to the cost as \emph{iso-elastic} as 
\begin{equation*}
\frac{dc\left( q\right) /dq}{c\left( q\right) /q}=\gamma \text{.}
\end{equation*}%
This family of cost functions satisfies log-concavity \eqref{eq:conv-mc} for
all $\gamma $ and nests several important cases: $\gamma =2$ gives quadratic
costs (as in \cite{muro78}), $\gamma \rightarrow \infty $ approximates the
unit demand model, and intermediate values parameterize the curvature of the
cost function. The supply function $Q$ takes the constant elasticity form: 
\begin{equation*}
Q(v)=\mathbbm{1}[v\geq 0]v^{\frac{1}{\gamma -1}}.
\end{equation*}%
with supply elasticity: 
\begin{equation*}
\frac{dQ(v)}{dv}\cdot \frac{v}{Q(v)}=\frac{1}{\gamma -1}\in (0,\infty ).
\end{equation*}%
A direct calculation shows that with the power cost function the inverse
hazard rate $\overline{h}$ that maximizes the local information rent has
the explicit form: 
\begin{equation}
\overline{h}(v)=\frac{\gamma -1}{\gamma }v.  \label{eq:gamma}
\end{equation}%
Now, the only class of distributions that has a linear inverse hazard rate
is the class of Pareto distributions given by 
\begin{equation}
F_{P}\left( v\right) \equiv 1-\left( \underline{v}/v\right) ^{\alpha },
\label{eq:paret}
\end{equation}%
where $\alpha \in \mathbb{R}_{+}$ is referred to as the shape parameter of
the Pareto distribution. The inverse hazard rate of the Pareto distribution
with shape parameter $\alpha$ is: 
\begin{equation*}
\frac{1-F_{P}\left( v\right) }{f_{P}\left( v\right) }=\frac{v}{\alpha }\text{%
.}
\end{equation*}%
An immediate Corollary of Proposition \ref{prop:cmv} for the case of
iso-elastic cost functions follows.

\begin{corollary}[Segmentation with Pareto Distribution]
\label{cor:cmvc}With iso-elastic cost and log-concave demand, there is a
consumer-optimal segmentation where each segment $m$ is composed of a Pareto
distribution and the aggregate demand: 
\begin{equation}
h_{m}(v)=%
\begin{cases}
\frac{\gamma -1}{\gamma }v, & \text{if }v<\widehat{v}; \\ 
h^{\ast }(v), & \text{if }v\geq \widehat{v};%
\end{cases}
\label{eq:cmvc1}
\end{equation}%
where the shape parameter of the Pareto distribution is $\alpha =\gamma
/(\gamma -1)$.
\end{corollary}

Thus, we have obtained an explicit shape for every segment. It is the
pairing of the Pareto distribution on the lower end and the aggregate
distribution itself on the upper end of the distribution. The shape
(parameter $\alpha $) of the Pareto distribution is determined by the
convexity (exponent $\gamma $) of the cost function.

We can restate the above Corollary in terms of the demand elasticity $\eta
_{m}(v)$ in every segment $m$. The demand elasticity is given by: 
\begin{equation*}
\eta _{m}(v)\equiv -\frac{f_{m}(v)v}{1-F_{m}(v)}=-\frac{v}{h_{m}(v)}.
\end{equation*}

\begin{corollary}[Segmentation and Demand Elasticity]
\label{cor:cmve}With iso-elastic cost and log-concave demand, there is a
consumer-optimal segmentation where each segment $m$ has constant elasticity
below a cutoff and matches the aggregate market elasticity above it:%
\begin{equation*}
\eta _{m}(v)=%
\begin{cases}
\frac{\gamma }{1-\gamma }, & \text{if }v<\widehat{v}; \\ 
\eta ^{\ast }(v), & \text{if }v\geq \widehat{v}.%
\end{cases}%
\end{equation*}
\end{corollary}

The corollary delivers a sharp economic message. The consumer-optimal
segmentation makes demand \textquotedblleft elastic
enough\textquotedblright\ at the bottom of the distribution, where the
monopolist's distortions are most severe. Above the threshold $\widehat{v}$,
demand elasticities are left as they are. Below $\widehat{v}$, they are set
to a constant that depends only on the cost structure---specifically, $%
\gamma /(1-\gamma )$, which is one plus the supply elasticity, with a sign
reversal.

This connection between supply and demand elasticities has a natural
interpretation. More elastic supply (lower $\gamma$) means the monopolist
can adjust quality more cheaply, amplifying the distortions it imposes on
low-value buyers. To counteract these larger distortions, the optimal
segmentation must make demand correspondingly more elastic at low values,
raising the \textquotedblleft cost\textquotedblright\ to the monopolist of
excluding or under-serving these buyers. More rigid supply (higher $\gamma$)
means the monopolist's quality choices are less responsive to market
composition, so less demand-side adjustment is needed.

\subsection{When Does Segmentation Always Harm Consumers?}

Perhaps the most policy-relevant question our framework can answer is: when
should we expect segmentation to hurt consumers regardless of how the market
is carved up? An immediate consequence of Proposition \ref{prop:cmv}
provides a sharp answer.

\begin{corollary}[Zero Segmentation]
\label{cor:no-seg2} Under log-concave demand and supply, zero segmentation
is optimal if and only if 
\begin{equation}
h^{\ast }({\underline{v}})\leq \overline{h}({\underline{v}}).
\label{zeroseg}
\end{equation}
\end{corollary}

The condition is appealingly simple: segmentation cannot help consumers when
the inverse hazard rate at the \emph{lowest} value ${\underline{v}}$ is
already below the rent-maximizing level. Since \eqref{eq:mhr} makes $h^{\ast
}$ decreasing, if the condition holds at ${\underline{v}}$, it holds
everywhere, so the designer cannot improve rents at any value. The economic
interpretation is that when demand is sufficiently elastic throughout the
market---when there are relatively few high-value buyers per unit of each
lower value---the monopolist's screening already provides consumers with
near-optimal informational rents. Any redistribution of values across
segments can only reduce them.

When the cost is iso-elastic, the zero-segmentation condition \eqref{zeroseg}
can be written as follows: 
\begin{equation}
\eta ^{\ast }(\underline{v})\leq \frac{\gamma }{1-\gamma },  \label{eq:ela}
\end{equation}%
which crystallizes the interplay between supply and demand elasticities. In
particular, when demand is given by a parametrized class of distributions,
we can use the condition (\ref{eq:ela}) to directly verify the optimality of
zero segmentation. For example, suppose the demand is given by the above
class of Pareto distributions. The demand elasticity is then constant and
given by the negative of the shape parameter, $-\alpha $. From Corollary \ref%
{cor:no-seg2} we then have that zero segmentation is optimal when 
\begin{equation*}
-\alpha \leq \frac{\gamma }{1-\gamma }\iff \alpha \geq \frac{\gamma }{\gamma
-1},
\end{equation*}
and thus demand is sufficiently elastic relative to supply. We can derive
similar condition for the class of exponential distributions or the class of
uniform distributions, both of which have monotone demand elasticities.

Corollary \ref{cor:no-seg2} stands in contrast with \cite{hasi23}: they show
that for generic markets, there is always a segmentation that improves
consumer surplus. The conditions for Corollary \ref{cor:no-seg2} to go
through are not too restrictive and, in particular, satisfy the notion of
genericity in \cite{hasi23}. The discrepancy arises because they consider
environments with a discrete number of goods, which does not satisfy %
\eqref{eq:conv-mc} (in fact, they provide a counterexample showing that
their result fails with a continuum of goods). Corollary \ref{cor:no-seg2}
tells us that as the number of discrete goods grows large, the gains from
segmentation they identify sometimes (but not always) disappear, and
provides a characterization of when this happens.

Using Theorem \ref{thm:main}, we can extend Corollary \ref{cor:no-seg2} to
the general environment. The general characterization of when zero
segmentation is optimal is somewhat technical, so we relegate it to the
appendix (see Corollary \ref{cor:no-seg}). Instead, further focusing the
cost function to be iso-elastic we can gain further insight into how cost
and demand determine when zero segmentation is optimal.

When cost is more elastic (lower $\gamma $), the threshold $\gamma /(
1-\gamma )$ becomes more negative, making the condition harder to
satisfy: the set of markets where zero segmentation is optimal \emph{shrinks}%
. A more elastic supply means the seller can adjust quality more readily,
amplifying the distortions that segmentation can exploit. Conversely, a more
inelastic supply (higher $\gamma $) makes quality provision rigid, reducing
the scope for segmentation to improve consumer welfare. Hence, the condition
makes it clear that the set of markets where zero segmentation is optimal
grows larger as the exponent $\gamma $ increases. In fact, this is true even
dropping the \eqref{eq:mhr} assumption. Denote by $Z_{\gamma }$ the set of
markets under which zero segmentation is optimal when the cost function is
given by \eqref{eq:iso-cost}.

\begin{proposition}[Zero Segmentation---Iso-elastic Cost]
\label{prop:com} With iso-elastic cost, for any $\gamma ^{\prime }<\gamma $, 
$Z_{\gamma ^{\prime }}\subset Z_{\gamma }$.
\end{proposition}

Hence, as the cost becomes more elastic ($\gamma $ decreases), the potential
gains from segmentation increase. This is despite the fact that even in some
aggregate markets $m^{\ast }$ associated with inefficient allocations, no
segmentation can be optimal. In contrast, in the limit $\gamma \rightarrow
\infty $, zero segmentation is optimal only if the allocation in the
aggregate market $m^{\ast }$ itself is efficient. One might have thought
that this means that it is relatively unlikely to find a market in which
zero segmentation is optimal. However, the conclusion is the opposite: in
the limit $\gamma \rightarrow \infty $, the cost is very inelastic, reducing
the potential benefits from segmentation.

\section{Generalization and Applications\label{sec:gen-exts}}

The analysis of Sections \ref{sec:opt-seg} and \ref{sec:conv-seg} was
developed for a specific objective (consumer surplus maximization) in a
specific environment (private values with a convex cost of quality). In this
section, we show that the underlying framework---the reduction to an
optimization over inverse hazard rates via the local information
rent---extends well beyond this setting. The key insight of Theorem \ref%
{thm:main}, reducing the segmentation problem to an optimization over
inverse hazard rates via the local information rent, carries over to
settings with adverse selection, alternative welfare objectives, and richer
preference structures.

The crucial property of the local informational rents that we used in our
analysis is: 
\begin{equation}
\frac{\partial ^{2}u(v,h)}{\partial h^{2}}+\frac{\partial ^{2}u(v,h)}{%
\partial h\partial v}<0,  \label{eq:smcu}
\end{equation}%
for all $(v,h)$ with $v>h$, thus the marginal local information rent
$\partial u/\partial h$ decreases along the diagonal direction in $(v,h)$.
We used this property to show that the solution
to \eqref{eq:ub} is regular. One can heuristically verify this by writing
the derivative of the optimality condition (\ref{eq:d-gaph}) explicitly; it
looks quite similar to \eqref{eq:smcu}, the only difference being that the
concavity of $u$ is multiplied by $h^\prime(v)$ in the optimality condition.
Now, the concavity is negative, so it must be multiplied by a number smaller
than one for the optimality condition (\ref{eq:d-gaph}) to have a different
sign than \eqref{eq:smcu}. Hence, when maximizing over inverse hazard rates %
\eqref{eq:ub} we get that a regular inverse hazard rate (ie., $%
h^\prime(v)\leq1$) is necessary for optimality.

A generalization of Theorem \ref{thm:main}, stated and established as
Theorem \ref{thm:gen} in the appendix, will apply to any objective function
that satisfies \eqref{eq:smcu}. We now discuss two extensions: adverse
selection environments, where the seller's cost depends on the buyer's type;
and Pareto-efficient segmentations, where the designer balances consumer and
total surplus. Finally we briefly discuss other extensions where the main
elements of Theorem \ref{thm:main} apply such as finite value or weakly
convex costs.

\subsection{Adverse Selection and Market Fragmentation}

Our baseline model assumes that the cost of quality provision is independent
of the buyer's type---a private-values environment. Many important
applications, however, feature adverse selection: the seller's cost rises
with the buyer's willingness to pay. Insurance markets, credit markets, and
financial trading all share this feature, and it fundamentally changes the
economics of screening. To incorporate adverse selection while maintaining a
transparent connection to our baseline, we consider the following
specification. A buyer who trades quantity $q$ obtains utility:%
\begin{equation}
\tau \left( v\right) q-p-\frac{q^{2}}{2},  \label{adv:1}
\end{equation}%
where $\tau (v)$ is the buyer's gross value and $v$ is the net
value known by the buyer, distributed according to $F^{\ast }(v)$. Now $q$
is interpreted as a quantity rather than a quality. The quadratic term
captures diminishing marginal utility, as in \cite{mari84}, or risk aversion
in asset trading models as in \cite{bimr00}. The cost of supplying the good
is: 
\begin{equation}
c(q)=(\tau (v)-v)q,  \label{adv:2}
\end{equation}%
which is linear in quantity but type-dependent. The derivative of $\tau $
measures the severity of adverse selection, and we assume that: 
\begin{equation}
\frac{d\tau (v)}{dv}\in \lbrack 1,\infty ),  \label{adv:3}
\end{equation}%
which implies that the cost is (weakly) increasing in the value $v$. As a
special case, if the derivative is equal to 1 at every $v$, the marginal
cost is independent of $v$, as in a private value environment. Otherwise, we
recover the trading model with adverse selection of \cite{bimr00}. We
additionally impose that: 
\begin{equation}
0\leq \frac{\tau ^{\prime \prime }(v)v}{\tau ^{\prime }(v)}\leq 1,
\label{adv:4}
\end{equation}%
which requires the degree of adverse selection to be increasing in $v$ but
not too quickly.

The local information rents in this environment take the form: 
\begin{equation}
w(v,h)\equiv u\left(v,\frac{d\tau (v)}{dv}h\right),  \label{adv:5}
\end{equation}%
where now the supply is $Q(v)=v$ because the utility is quadratic in $q$.
Adverse selection enters by amplifying the inverse hazard rate: the
effective inverse hazard rate facing the designer is $\tau ^{\prime }(v)h$
rather than $h$ alone. This amplification has a clear economic
interpretation. Under adverse selection, the seller's screening problem is
more severe---each unit of information asymmetry (as captured by $h$) does
more damage---because the seller must account for both the downward
distortion of quantities and the type-dependent cost wedge.

\begin{proposition}[Optimal Segmentation with Adverse Selection]
\label{ext:adv} In the model of adverse selection \eqref{adv:1}-\eqref{adv:4}%
, the maximum consumer surplus attained by segmentation is: 
\begin{equation}
\max_{\sigma \in \operatorname{MPS}(m^{\ast })}\left[ \int_{\Delta V}U(m)d\sigma (m)%
\right] =\max_{h\prec h^{\ast }}\ \int_{\underline{v}}^{\overline{v}%
}w(v,h(v))\ dF^{\ast }(v),  \label{thm:advs}
\end{equation}%
where $w$ is given by \eqref{adv:5}. Furthermore, every segmentation
achieving this value is a uniform segmentation implementing some $h$ which
solves the rhs of \eqref{thm:advs}
\end{proposition}

The result shows that the analytical framework---the reduction to
inverse hazard rates, the majorization constraint, and the optimality of
uniform segmentations---carries over to this environment. What changes is the
shape of the objective function, and with it the quantitative features of the
optimal segmentation. In particular, greater adverse selection (higher $\tau
^{\prime }$) leads the optimal segmentation to feature more elastic demands
within each segment. The intuition is that adverse selection already
depresses trade; the designer responds by making segments more elastic, so
that more generous terms of trade can be sustained.

It is worth noting why the reduction goes through under
\eqref{adv:1}-\eqref{adv:4}, and where it would not. Adverse selection enters
this specification only by rescaling the inverse hazard rate, $h\mapsto
\tau ^{\prime }(v)h$, leaving an objective of the same form $w(v,h)=u(v,\tau
^{\prime }(v)h)$. Three assumptions do the work. The linear-quadratic payoff in
\eqref{adv:1} delivers a linear supply $Q(v)=v$, so the local rent retains the
product structure $h \cdot Q(\cdot)$ underlying Lemma \ref{lm:csr}. Condition
\eqref{adv:3} keeps $v$ an increasing index of willingness to pay, so segment
regularity is still governed by monotone virtual values. Most importantly,
\eqref{adv:4} bounds how fast the severity of adverse selection grows in $v$,
which is exactly what preserves the inequality \eqref{eq:smcu} driving the
optimality of regular, uniform segmentations. The reduction does \emph{not}
extend automatically beyond these conditions: if \eqref{adv:4} fails, so that
adverse selection intensifies too quickly in $v$, then $w$ can violate
\eqref{eq:smcu} and the optimality of uniform segmentations may break down.

\paragraph{Market Fragmentation}

In this context, segmentation has a natural reinterpretation as market
fragmentation: buyers trade across different venues, each attracting a
different mix of types. A monopolist market maker sets terms in each venue,
and our analysis characterizes the fragmentation structure that maximizes
buyer welfare. This connection to market microstructure opens a broader
reading of our results. The question ``when does market
segmentation help consumers?'' maps onto ``when does fragmenting a trading venue improve the terms of
trade?'' Our characterization---the answer hinges on the
interaction of value-dependent costs and the distribution of buyer values---
provides a foundation for analyzing fragmentation in financial markets. The
assumption that the good is supplied by a single seller (market maker)
provides a natural benchmark (see, for example, \cite{glos89}), however,
extending the model to allow buyers to submit orders to multiple venues
simultaneously, as in \cite{maro17}, is a natural and promising direction.

\subsection{Pareto-Efficient Segmentations}

Our main analysis focused on maximizing consumer surplus, but a regulator or
a platform may care about broader welfare objectives. We now consider the
problem of finding segmentations that are Pareto efficient---those that
maximize a weighted sum of consumer surplus $U(m)$ and social surplus $S(m)$:
\begin{equation*}
U(m)+\lambda S(m),
\end{equation*}%
where $\lambda \geq 0$ represents the relative weight on total welfare (or,
equivalently, on the seller's profits, since social surplus includes both
consumer surplus and profits). Setting $\lambda =0$ recovers our baseline
problem; the limit $\lambda \rightarrow \infty $ corresponds to maximizing
social surplus alone. Any objective placing a larger weight on profits than
on consumer surplus yields the same solution as this limit, so varying $%
\lambda $ traces the entire Pareto frontier. We apply Theorem \ref{thm:gen}
by setting: 
\begin{equation}
w(v,h)\equiv u(v,h)+\lambda (vQ(v-h)-c(Q(v-h))).  \label{w2}
\end{equation}%
Here $u(v,h)$ gives the local informational rents while the second term
multiplying $\lambda $ is the local social surplus generated by value $v$.

\begin{proposition}[Linear Combination of Consumer and Social Surplus]
\label{ext:sw} If $\lambda\geq-1$, the maximum linear combination of social
surplus and consumer surplus attained by segmentation is: 
\begin{equation}  \label{thm:PE}
\max_{\sigma \in \operatorname{MPS}(m^\ast)} \left[\int_{\Delta V} U(m)+\lambda
S(m)\ d\sigma(m) \right] = \max_{h\prec h^\ast}\ \int_{\underline
v}^{\overline v}w(v,h(v))\ dF^\ast(v),
\end{equation}
where $w$ is given by \eqref{w2}. Furthermore, every segmentation achieving
this value is a uniform segmentation implementing some $h$ which solves the
rhs of \eqref{thm:PE}
\end{proposition}

The proposition confirms that the majorization framework applies across the
Pareto frontier. What changes as $\lambda $ increases is the shape of $w$
and, consequently, the optimal inverse hazard rates. Higher $\lambda 
$ pushes the optimal segmentation toward more elastic demands within each
segment: reducing distortions becomes more valuable when social surplus
carries greater weight. In the limit $\lambda \rightarrow \infty $, demands
become perfectly elastic---the efficient outcome---which corresponds to
first-degree price discrimination. This makes economic sense: when the
objective is purely allocative efficiency, the designer's ideal is to
eliminate all screening distortions, which requires each segment to contain
a single value. Under iso-elastic costs, as introduced in Section \ref%
{sec:conv-seg}, the demand elasticity in each segment below the threshold
takes the closed-form value $(\gamma+\lambda )/( 1-\gamma)$. As $\lambda $
increases, this elasticity rises monotonically, confirming the intuition
that moving along the Pareto frontier toward greater total surplus involves
progressively reducing market power within segments. Beyond
the consumer-optimal level, further increases in demand elasticity no longer
benefit consumers: they translate directly into gains for the seller.

Figure \ref{fig:pareto} illustrates the Pareto frontier for a uniform
aggregate market $F^{\ast }=\mathcal{U}[0,1]$ and quadratic cost $c(q)=q^{2}/2$. As $\lambda $
ranges over $[-1,\infty )$, the induced pairs $(U,\Pi )$ trace the rightward boundary of
the set attainable by segmentation. Two features stand out. First, the
consumer-optimal segmentation ($\lambda =0$) delivers strictly more consumer
surplus \emph{and} profit than no segmentation:
here, some segmentation benefits buyers and seller alike. Second, as $\lambda $ rises
beyond the consumer-optimal level, the optimum moves along the frontier, trading
consumer surplus for profit, and converges as $\lambda \rightarrow \infty $ to
the efficient outcome implemented by first-degree price discrimination. The outcome is socially efficient only at that point;
elsewhere total surplus falls short because the within-segment
screening distortion cannot be removed while leaving rents to consumers. Unlike the unit-demand benchmark of \cite{bebm15}, the full
surplus triangle is not attainable under screening within segments (cf.\
\cite{hasi22}).

\begin{figure}[th]
\centering
\includegraphics[width=0.55\textwidth]{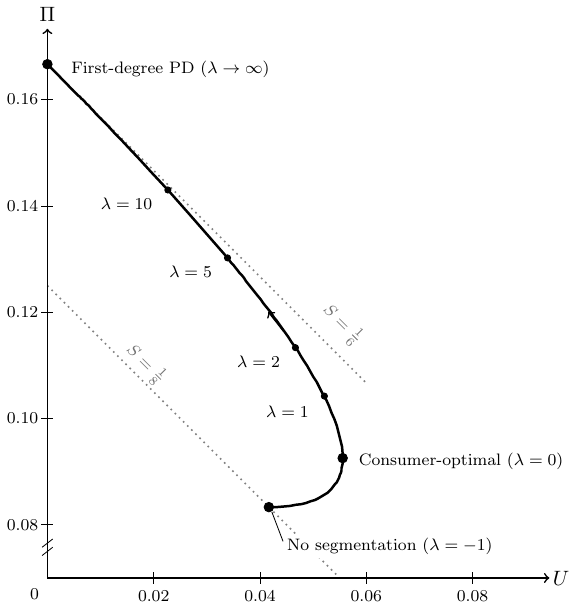}
\caption{Pareto frontier of $(U,\Pi)$ achievable with segmentation, where 
$c(q)=q^{2}/2$ and $F^{\ast }=\mathcal{U}[0,1]$. Dotted lines are iso-surplus loci
$U+\Pi =S$.}
\label{fig:pareto}
\end{figure}

\subsection{Other Extensions\label{subsec:other-ext}}

\paragraph{Finite Values}

When the aggregate value distribution is discrete, the analysis
carries through with natural modifications. The objective must be replaced
by its concave envelope in the inverse hazard rate, and the notion of
uniform segmentation is relaxed to an average implementation condition.
Buyers of the same value may receive different allocations across segments,
but allocations remain monotone in value. As the discrete model approximates
a continuum, these differences vanish, and outcomes converge to the
continuous benchmark.
%The details can be found in the working paper version
%of our paper, \cite{behw24}, which focuses exclusively on the analysis of
%discrete distributions.

\paragraph{Weakly Convex Cost}

The uniqueness of the optimal solution (see discussion after Theorem \ref%
{thm:main}) crucially relies on the strict convexity of cost; if cost is
only weakly convex, the solution we characterize continues to be optimal,
but non-uniform segmentations can also be optimal. For example, in the
unit-demand environment, there are optimal segmentations with varying
inverse hazard rates across markets and also with irregular markets. We thus
obtain much stronger predictions by assuming the cost function is strictly
convex rather than weakly convex.

\section{Conclusion\label{sec:con}}

This paper characterizes how market segmentation affects consumer welfare
when monopolists can engage in both second- and third-degree price
discrimination. Our analysis yields several key insights. First, the
consumer-optimal segmentation maintains consistent quality provision across
segments while allowing price variation. Second, the benefits of
segmentation depend critically on demand and cost elasticities, with no
segmentation being optimal when aggregate demand is sufficiently elastic.

Our central methodological contribution is to show that the consumer-optimal
segmentation, despite being defined over an infinite-dimensional space of
distributions of distributions, reduces to maximizing the expected local
information rent over a single inverse hazard rate function subject to a
majorization constraint. This reduction enables a complete characterization:
consumers of the same value receive the same quality in every segment; the
optimal segmentation can be implemented through a common quality menu with
segment-specific fixed fees; and whether segmentation helps consumers at all
is determined by a sharp threshold on the interaction of demand elasticities
and cost convexity.

These theoretical results have direct practical implications. For
competition authorities, they suggest that market segmentation should be
evaluated based on observable market characteristics. When consumer demand
is sufficiently elastic across the whole market, segmentation is uniformly
bad for consumers. The monopolist's screening already provides information
rents close to what any information structure could achieve, and
segmentation can only dilute these rents by redistributing buyer values
inefficiently. This contrasts sharply with the unit-demand benchmark, where
segmentation is always potentially beneficial, and it provides a concrete,
testable criterion for when segmentation raises consumer welfare. For firms,
our characterization of optimal segmentation strategies offers guidance for
designing market segmentation policies that balance profit maximization with
consumer welfare. Our analysis also connects to broader debates about big
data and personalized pricing in digital markets. While enhanced ability to
segment markets could enable more sophisticated price discrimination, our
results suggest this may benefit consumers when properly structured.

Several important directions remain for future research. First, extending
the analysis to competitive markets as in the recent analysis of \cite%
{bebm25} for single unit demand would provide insight into how market
structure affects optimal segmentation. Second, empirical work testing our
theoretical predictions about the relationship between demand elasticities
and optimal segmentation would be valuable.

\newpage

\appendix

\renewcommand{\thelemma}{\Alph{section}.\arabic{lemma}} \renewcommand{%
\theproposition}{\Alph{section}.\arabic{proposition}} \renewcommand{%
\thecorollary}{\Alph{section}.\arabic{corollary}} \setcounter{lemma}{0} %
\setcounter{proposition}{0} \setcounter{corollary}{0} %
\numberwithin{figure}{section} \numberwithin{equation}{section}

\section{Appendix}

\label{sec:proofs}

\begin{proof}[Proof of Lemma \ref{lm:csir}]
We begin by describing the profit-maximizing menu. Define
\[
J_m(t) \equiv F^{-1}_m(t)(t-1).
\] 
We denote by $t_m$ the largest minimizer of $J_m$ and by $\mathrm{vex}[J_m]$ the convexification (i.e.\ the pointwise largest convex function smaller than $J_m(t)$). Then, a buyer with value $v$ buys quality $q$ satisfying:
    \[c^\prime(q)=\frac{\partial}{\partial t}\mathrm{vex}\left[J_{m}\right](F_m(v)),\]
if $F_m(v)\geq t_m$, and $q=0$ otherwise. Note that $J^{\prime}_m(F_m(v))=\phi_m(v)$, so the expression coincides with the one in the main text when $m$ is regular.

Consider the set of markets $X_{p_m}$ where menu $p_m$ is seller-optimal:
\begin{equation*}
    X_{p_m} \equiv \big\{m^{\prime} \in \Delta V: \Pi(m^{\prime},p_{m^{\prime}}) = \Pi(m^{\prime},p_m)\big\}.
\end{equation*}
The seller's profit $\Pi(m',p_{m'})$ is a pointwise supremum of the weak$^\ast$-continuous affine maps $m'\mapsto\Pi(m',p)$, and hence convex and weak$^\ast$-lower semicontinuous (lsc); since $\Pi(m',p_{m'})\geq\Pi(m',p_m)$ and $m'\mapsto\Pi(m',p_m)$ is affine and weak$^\ast$-continuous, $X_{p_m}$ is a sublevel set of a convex weak$^\ast$-lsc function, and therefore convex and weak$^\ast$-closed (Theorem 7.6 and Lemma 2.42, \cite{albo06}). Because $V$ is a compact metric space, $\Delta V$ is itself a convex, weak$^\ast$-compact, metrizable set (Theorem 15.11, \cite{albo06}), and the closed, convex subset $X_{p_m}$ inherits all three properties. Therefore, by Choquet's theorem (\cite{phel01}), $m \in X_{p_m}$ can be written as a segmentation supported on its extreme points.

Now, observe that $m^{\prime}\in X_{p_m}$ if and only if
\[
t_m = t_{m^\prime} \text{ and } \mathrm{vex}\left[J_{m}\right](t)=\mathrm{vex}\left[J_{m^{\prime}}\right](t) \text{ for all } t \geq t_m.
\]
On the other hand, $m$ is irregular if and only if $J_m(t)>\mathrm{vex}[J_m](t)$ on some interval $[t_1,t_2]$ where $F_m^{-1}(t)$ is not constant. For such $m$, sufficiently small perturbations of $F_m(v)$ over  $v \in [F_m^{-1}(t_1), F_m^{-1}(t_2)]$ leave $\mathrm{vex}[J_m]$ unchanged (and hence remain in $X_{p_m}$), meaning irregular markets are not extreme points. Thus, irregular markets can be segmented into regular markets without changing the profit-maximizing menu.

The preceding reduction also allows us to restrict to markets without singular-continuous components. Let $A_m \subseteq (0,1)$ denote the union of the interiors of all maximal intervals on which $F_m^{-1}$ is constant; these correspond exactly to the atoms of $F_m$. Define
\[
 H(t) \equiv \frac{\operatorname{vex}[J_m](t)}{t-1} \text{ for } t \in (0,1).
\]
Since $\operatorname{vex}[J_m]$ is convex, $H$ is locally absolutely continuous. Moreover, regularity implies $F_m^{-1}(t) = H(t)$ for Lebesgue-a.e.\ $t \in A_m^c$. Take any $t \in A_m^c$ where $H'(t)$ exists; except at the endpoint of a terminal constant interval, there exists $s > t$ such that $F_m^{-1}(s) > F_m^{-1}(t)$. Convexity and the fact that $\operatorname{vex}[J_m] \leq J_m$ implies
\[
\operatorname{vex}[J_m]'(t) \leq \frac{\operatorname{vex}[J_m](s)-\operatorname{vex}[J_m](t)}{s-t} \leq \frac{J_m(s)-J_m(t)}{s-t}.
\]
Thus,
\begin{equation}\label{eq:iron-q-d}
    {H}'(t) \geq \frac{1-s}{1-t} \cdot \frac{F_m^{-1}(s)-F_m^{-1}(t)}{s-t} > 0 \text{ for Lebesgue-a.e.\ } t \in A_m^c.
\end{equation}

Let $B \subseteq V$ be any set with Lebesgue measure zero containing no values carrying atoms, meaning that $F_m^{-1}(t) \in B \implies t \in A_m^c$. Since $H$ is locally absolutely continuous, $H'$ exists a.e. on $(0,1)$, and the Serrin-Varberg theorem (\cite{seva69}) gives
\[
\lambda(\{t : H'(t) \neq 0 \text{ and } H(t) \in B\}) = 0,
\]
where $\lambda$ denotes the Lebesgue measure. Combining this with \eqref{eq:iron-q-d} implies that
\[
\lambda(\{t \in A_m^c: H(t) \in B\}) = 0.
\]
Finally, since $F_m^{-1}(t) = H(t)$ for Lebesgue-a.e.\ $t \in A_m^c$,
\[
m(B)=\lambda(\{t : F_m^{-1}(t)\in B\}) \leq \lambda(\{t \in A_m^c : H(t)\in B\}) = 0,
\]
meaning $m$ has no singular-continuous component.
\end{proof}

To make the notation more compact, we denote the survival function by: 
\begin{equation*}
D_{m}\equiv 1-F_{m}.
\end{equation*}

\begin{proof}[Proof of Proposition \ref{prop:avg-h}]
    To begin, we claim that for any regular market $m$,
    \begin{equation}\label{eq:d-gap}
        \int^{\overline v}_v D_m(t)\ dt = 
        \begin{cases}
            \int^{\overline v}_vh_m(t)\ dF_m(t) & \text{if } \Delta_m(v)=0; \\
            \int^{\overline v}_vh_m(t)\ dF_m(t) + \Delta_m(v) D_m(v) & \text{if }\Delta_m(v)>0.
    \end{cases}
    \end{equation}
In any regular market, $\Delta_m(v)=0$ if and only if $F_m$ is absolutely continuous at $v$ in the local sense that the continuous non-atomic component admits a density there; by the observation in Lemma \ref{lm:csir}, regularity rules out a singular-continuous component.  If $F_m$ is absolutely continuous at $v$ with density $f_m(v)$, then by definition $D_m(v) = f_m(v) h_m(v)$. Also, in any regular market, at the lower boundary of a gap there is always an atom. That is, at any interval $(v_1,v_2)$ satisfying 
\[(v_1,v_2)\cap\mathrm{supp}(m)=\{\emptyset\}\text{ and }v_1,v_2\in\mathrm{supp}(m),\] we must have that $a(v_1)>0$.    In any such interval,
    \[
    \int_v^{v_2} D_m(t)\ dt = \Delta_m(v)D_m(v)=\Delta_m(v)D_m(v)+  \int^{ v_2}_vh_m(t)\ dF_m(t).
    \]
 where in the second equality we are integrating zero because $dF(v)$ is zero when $v \notin \supp(m)$. At the limit $v=v_1$ we obtain
      \[
    \int_{v_1}^{v_2} D_m(t)\ dt =  a_m(v_1)h_m(v_1) =\lim_{w\uparrow v_1}  \int_w^{v_2} h_m(t)\ dF_m(t)
    \]
where we take the left limit of the lower boundary of the integral to include the atom at $v_1$. Thus, combining the intervals, we get \eqref{eq:d-gap}.

    We now apply this to the aggregate market:
    \begin{multline}\label{eq:eh-proof}
        \int_v^{\overline{v}} h^\ast(t)\ dF^\ast(t) = \int_v^{\overline{v}} \int_{\Delta V}  D_m(t) \ d\sigma(m) \ dt = \int_{\Delta V}  \int_v^{\overline{v}} D_m(t)\ dt \ d\sigma(m) \\
        = \int_{\Delta V}  \int_v^{\overline{v}}h_m(t)\ dF_m(t) \ d\sigma(m)+ {\int_{\Delta V}}\mathbbm{1}\{v\not\in \mathrm{supp}(m)\} \Delta_m(v)D_m(v) \ d\sigma(m),
    \end{multline}
    where the change in the order of integration uses Fubini's theorem (Theorem 11.27, \cite{albo06}) and the second line is applying \eqref{eq:d-gap}. The first term can be written as:
    \[
    \int_{\Delta V} \int_v^{\overline{v}} h_m(t)\ dF_m(t)\ d\sigma(m) = \int_v^{\overline v} \int_{\Delta V} h_m(t)\ d\Sigma_t(m) \ dF^\ast(t) = \int_v^{\overline v} h_\sigma(t)\ dF^\ast(t).
    \]
    This again uses Fubini's theorem and the disintegration/conditional measure notation embedded in the definition of $\Sigma_t$.
    Subtracting this from the lhs of \eqref{eq:eh-proof} yields the result.
\end{proof}

\begin{proof}[Proof of Proposition \ref{prop:unif-seg-ub}]
Endow $\Delta V$ and $\Delta(\Delta V)$ with the weak topology on measures, so that $\Delta V$ and $\Delta(\Delta V)$ are closed and compact. The maximization problem \eqref{eq:ub} is taken over $h \in L_+^p[\underline{v},\overline{v}]$, where $1 < p < \infty$. Boundedness guarantees $Q$ (and $u$) lie in $L_+^p$, and our tie-breaking assumptions ensure that $Q$ and $u$ are upper semicontinuous (usc). The assumption that $f^\ast(v) > 0$ and $F^\ast$ is real-analytic means that $h^\ast$ is real-analytic. A crucial tool for our analysis will be the concave upper envelope of $u(v,h)$ with respect to $h$ (i.e.\ the pointwise smallest concave function larger than $u(v,h)$), which we denote by $\overline{u}(v,h)$. Since $u(v,h)$ is bounded and zero when $h \geq v$, $\overline{u}(v,h)$ is constant for $h \geq \overline{h}(v)$.

The proof is divided into two main parts. Proposition \ref{prop:sup-max} proves that there exists an $h$ which attains the supremum of \eqref{eq:ub}, and characterizes the solution using a KKT system, which allows us to derive certain properties of the solution (Lemma \ref{lm:h-opt-reg}). Proposition \ref{prop:canon-seg} then provides a particular construction method for a uniform segmentation implementing any $h$ which satisfies the technical properties of Lemma \ref{lm:h-opt-reg}.

\begin{proposition}[Supremum is Attained]\label{prop:sup-max}
    The supremum of \eqref{eq:ub} is attained. Furthermore, $h$ solves \eqref{eq:ub} if and only if both \textup{(i)} $\frac{\partial}{\partial h} u(v,h(v))$ is non-decreasing in $v$ and constant in $v$ when $E_h(v) > 0$, and \textup{(ii)} $u(v,h(v)) = \overline{u}(v,h(v))$
    hold at almost all $v$.
\end{proposition}

\begin{proof}
  For any $h(v)$ we have that
    \[\widehat h(v)=\min\{h(v),\overline h(v)\}\]
    generates higher local informational rents and relaxes the majorization constraint. Hence, it is without loss to make the feasible set
    \[
    \mathcal{H} \equiv \Big\{h \in L_+^p[\underline{v},\overline{v}] \mid h \prec h^\ast \text{ and } 0 \leq h(v) \leq \overline{h}(v) \text{ a.e.}\Big\}.
    \]
    We then study the relaxed problem:
    \begin{equation}\label{eq:relaxed}
    \sup_{\mathcal{H}} \int_{\underline{v}}^{\overline{v}} \overline{u}(v,h(v))\ dF^\ast(v).
    \end{equation}
    Endow this space with the weak topology. By construction, $\overline{u}(v,h)$ is concave (and hence also continuous) in $h$, and additionally bounded because $u$ is bounded. By \cite{rock68}'s theorem on integral functionals, the rhs is concave and weakly usc over $\mathcal{H}$. Furthermore, since $\mathcal{H}$ is a closed subset of a weakly compact space, it is itself weakly compact (Theorem 6.25, \cite{albo06}). Hence, by Weierstrass Theorem, the supremum is attained.

    Next, we wish to characterize the solution of \eqref{eq:relaxed} using a KKT system. The classic Slater constraint qualification fails, due to the fact that both the box and majorization constraints have no topological interior in their corresponding function spaces. However, by applying an appropriate generalization of this constraint qualification to infinite dimensional settings (\cite{jewo92,bocw08}), we get that the KKT conditions are both necessary and sufficient.\footnote{In particular, we need a feasible point in the \emph{quasi-relative interior} of $\mathcal{H}$, i.e.\ an $H \in \mathcal{H}$ such that
    \[
    0 < h(v) < \overline{h}(v) \text{ a.e.\ and} \int_v^{\overline v} h(t)\ dF^\ast(t) < \int_v^{\overline v} h^\ast(t)\ dF^\ast(t) \text{ for all } v < \overline v,
    \]
    which is satisfied by truncating $h^\ast$ by $\overline{h}$ and then scaling down by $0 < \alpha < 1$.}
    Thus, for any optimal $h \prec h^\ast$, there exists a function $\lambda(v) \in L_+^q[\underline{v},\overline{v}]$, where $\frac{1}{p} + \frac{1}{q} = 1$, and a finite non-negative measure $\mu$ such that (Theorem 13.18, \cite{albo06})
    \[
    \frac{\partial}{\partial h}\overline{u}(v,h(v)) = \mu[\underline v,v] - \lambda(v) \text{ and } \lambda(v) h(v) = E_h(v) d\mu(v) = 0 \text{ a.e.}
    \]
    Here $\lambda$ is the multiplier on the non-negativity constraint, and $\mu$ the multiplier on the majorization constraint (we know the upper bound never binds).

    In fact, the non-negativity constraint is never binding (i.e.\ $\lambda(v) = 0$ a.e.). Suppose there is an interval $(v_1,v_2)$ such that the constraint binds in this interval (that is,  $\lambda(v)>0$ and, consequently, also  $h(v) = 0$ on this interval). We must clearly have that the constraint slacks in this interval, so $\mu[\underline v,v]$ is constant for all  $v\in[v_1,v_2]$. Furthermore, if $v_1>\underline v$, then the constraint must slack in $v\in[v_1-\epsilon,v_2]$ for an $\epsilon$ small enough (which follows from the fact that $E_h(v)$ is strictly decreasing in $[v_1,v_2]$). If $v_1 = \underline{v}$, then $\mu[\underline v,v]$ must be equal to zero in $v\in[v_1,v_2]$, but this contradicts the first-order condition. If $v_1>\underline v$, then we must have that:
    \[
    \frac{\partial}{\partial h}\overline{u}(v_1-\epsilon,h(v_1-\epsilon)) = \mu[\underline v,v_1]>  \mu[\underline v,v_1] - \lambda(v_1+\epsilon),
    \]
    where we simply used that the constraint slacks so $\mu[\underline v,v]$ stays constant in $[v_1-\epsilon,v_1+\epsilon]$ (for a small enough $\epsilon$). However, we also have that:
    \[\frac{\partial}{\partial h}\overline{u}(v_1-\epsilon,h(v_1-\epsilon)) \leq \frac{\partial}{\partial h}\overline{u}(v_1-\epsilon,0)\leq   \frac{\partial}{\partial h}\overline{u}(v_1+\epsilon,0),\]
    where in the first inequality we used that $\overline u$  is concave and in the second one we used that $\partial \overline u(v,0)/\partial h$ is increasing in $v$. We thus reach a contradiction with the fact that the first-order condition is satisfied at $v=v_1+\epsilon$. Thus, the KKT conditions simplify to
    \begin{equation}\label{eq:kkt}
    \frac{\partial}{\partial h}\overline{u}(v,h(v)) = \mu[\underline v,v] \text{ and } E_h(v) d\mu(v) = 0 \text{ a.e.}
    \end{equation}

    Our final step is to show that \eqref{eq:kkt} implies that there is no relaxation gap by considering the concavification in \eqref{eq:relaxed}. We prove that  for any $h$ satisfying \eqref{eq:kkt}, $u(v,h(v)) = \overline{u}(v,h(v))$ a.e. An implication of \eqref{eq:kkt} is that
    \[
        \frac{d}{dv} \frac{\partial}{\partial h} \overline{u}(v,h(v)) = \frac{\partial^2}{\partial v \partial h} \overline{u}(v,h(v)) + \frac{dh}{dv} \cdot \frac{\partial^2}{\partial h^2} \overline{u}(v,h(v)) \geq 0 \text{ a.e.}
    \]
    since $\mu$ is a non-negative measure. Take any $v$ where $u(v,h(v)) < \overline{u}(v,h(v))$. The second term of the above expression is zero ($\overline u$ is affine). Let $(h_B,h_T)$ be the bottom and top of the support point of $\overline{u}(v,h(v))$, respectively, so that
    \begin{equation}\label{eq:h-slope}
        \frac{\partial \overline u(v,h(v))}{\partial h}=\frac{\partial u(v,h_B(v))}{\partial h}=\frac{\partial u(v,h_T(v))}{\partial h}=\frac{u(v,h_T(v))-u(v,h_B(v))}{h_T(v)-h_B(v)}.
    \end{equation}
    Here we are using that $\overline u(v,h)=u(v,h)$ for all $h < \epsilon$ (for a small enough $\epsilon$), so the concavification differs from $u$ only at interior hazard rates. We thus get that
    \[
    \frac{d}{dv}\frac{\partial \overline u(v,h(v))}{\partial h}=\frac{1}{h_T(v)-h_B(v)}\left(\frac{\partial u(v,h_T(v))}{\partial v}-\frac{\partial u(v,h_B(v))}{\partial v}\right).
    \]
    Here we took the derivative of the rhs of \eqref{eq:h-slope}, and then used \eqref{eq:h-slope} to cancel out all the terms with derivatives of $h_T(v)$ and $h_B(v)$, so only the partial derivative with respect to $v$ is left. Finally, using the middle equality of \eqref{eq:h-slope}, we get that:
    \begin{align*}
        \frac{d}{dv}\frac{\partial \overline u(v,h(v))}{\partial h}
        =&\frac{1}{h_T(v)-h_B(v)}\int_{h_B(v)}^{h_T(v)}\left(\frac{\partial^2 u(v,h)}{\partial h^2}+\frac{\partial^2 u(v,h)}{\partial h\partial v}\right)dh < 0
    \end{align*}
    where the inequality follows from
    \begin{equation}\label{eq:supermod-no-h}
    \frac{\partial^2 u(v,h)}{\partial h^2}+\frac{\partial^2 u(v,h)}{\partial h\partial v} = -Q^\prime(v - h) < 0.
    \end{equation}
    Thus, there can be no positive measure of points where $u(v,h) < \overline u(v,h)$ for any $h$ that satisfies \eqref{eq:kkt}. Since $m^\ast$ is absolutely continuous, this means any solution of the relaxed problem \eqref{eq:relaxed} achieves the same value in the original problem:
    \[
        \max_{\mathcal{H}} \int_{\underline{v}}^{\overline{v}} \overline{u}(v,h(v))\ dF^\ast(v)=   \max_{\mathcal{H}} \int_{\underline{v}}^{\overline{v}} {u}(v,h(v))\ dF^\ast(v),
    \]
    and the solution to \eqref{eq:relaxed} is also a maximizer of \eqref{eq:ub}.
\end{proof}

Before moving on to the construction of the uniform segmentation implementing the solution, we prove that the solution satisfies regularity and the constraint binds near $\overline{v}$.

\newpage
\begin{lemma}[Regularity of Optimal $h$]\label{lm:h-opt-reg}
    For any $h \prec h^\ast$ which satisfies \eqref{eq:kkt}, \textup{(i)} $h^\prime(v) < 1$ a.e., and \textup{(ii)} there exists a $\delta > 0$ such that $h(v) = h^\ast(v)$ for almost all $v \in [\overline{v} - \delta, \overline{v}]$.
\end{lemma}

It is then without loss to assume that (i) and (ii) hold exactly everywhere, since measure 0 sets do not matter when $m^\ast$ is absolutely continuous. Then, (i) both implies that $v - h(v)$ is monotone, i.e.\ the canonical segmentation implementing $h$ is supported on (strictly) regular markets only, and that $h$ has bounded total variation.

\begin{proof}
    Take the total derivative of the marginal local information rent with respect to $v$:
    \[
    \frac{d}{dv} \frac{\partial u(v,h(v))}{\partial h} = h^\prime(v) \frac{\partial^2 u(v,h(v))}{\partial h^2} + \frac{\partial^2 u(v,h(v))}{\partial h \partial v} \geq 0.
    \]
    Observe that $u(v,h(v)) = \overline{u}(v,h(v))$ implies $u$ is concave in $h$ at $h(v)$ (Theorem 7.23, \cite{albo06}); (i) then follows from \eqref{eq:supermod-no-h}. For (ii), suppose by way of contradiction that there is an interval $(\overline{v} - \delta,\overline{v})$ such that $E_h(v) > 0$ on this interval ($E_h$ is continuous even if $h$ is not continuous). By \eqref{eq:kkt}, $\frac{\partial}{\partial h} \overline{u}(v,h(v))$ must be constant on this interval, and at most $Q(\overline{v} - \delta)$. But, $h^\ast(v) \rightarrow 0$, meaning $\frac{\partial}{\partial h} \overline{u}(v,h^\ast(v)) \rightarrow Q(\overline{v})$. This means for all $v$ sufficiently close to $\overline{v}$, $h(v) > h^\ast(v)$, which means the majorization constraint is not satisfied. We thus reach a contradiction.
\end{proof}

\begin{proposition}[Implementability of Optimal $h$]\label{prop:canon-seg}
    There exists a uniform segmentation $\sigma$ implementing the solution to \eqref{eq:ub}.
\end{proposition}

\begin{proof}
    Consider a continuum of markets indexed by $m \in [0,1]$. The support is determined by a function $s: V \rightarrow [0,1]$ such that $v\in\mathrm{supp}(m)$ if and only if $m\leq s(v)$. The distribution is absolutely continuous in market $m$ at all $v$ such that $m < s(v)$ with density $f_m(v)$:
    \begin{equation}\label{dens}
        f_m(v) = \frac{D_m(v)}{h(v)},
    \end{equation}
    and there is possibly an atom of size $a_m(v)$ at $v$ at market $m = s(v)$:
    \begin{equation}\label{atom}
        a_m(v) = \frac{\Delta_{m}(v) D_{m}(v)}{h(v)}.
    \end{equation}
    Finally, $s(v)$ satisfies \eqref{eq:h-sigma}, which in this case can be written more simply as follows:
    \begin{equation}\label{eq:s}
        \int^{1}_{s(v)}\Delta_{m}(v)D_m(v)\ dm = E_h(v).
    \end{equation}

    The proof proceeds in two steps. First, we show that given a solution to \eqref{eq:s}, the associated segmentation given by \eqref{dens}-\eqref{atom} aggregates to $m^\ast$. That is, for all $v$,
    \begin{equation}\label{eq:feas-s}
    D^\ast(v) = \int_0^1 D_m(v)\ dm \iff \int_v^{\overline{v}} D^\ast(t)\ dt = \int_v^{\overline{v}} \int_0^1 D_m(t)\ dm\ dt.
    \end{equation}
   Here, the equivalence simply states that two functions are the same if and only if their
integrals are the same.  As before, we have that for any $m$,
    \[
    \int_v^{\overline{v}} D_m(t)\ dt = \int_v^{\overline{v}} h(t)\ dF_m(t) + \Delta_m(v) D_m(v).
    \]
    The second term is only nonzero when $m > s(v)$ (or, equivalently,  $v \notin \supp(m)$), so:
    \[
    \int_v^{\overline{v}} \int_0^1 D_m(t)\ dm\ dt = \int_0^1 \int_v^{\overline{v}} h(t)\ dF_m(t)\ dm + \int_{s(v)}^1 \Delta_m(t) D_m(t)\ dm,
    \]
    where we apply Fubini's theorem to exchange the integration order.
    Now, using \eqref{eq:s} to replace the second term on the rhs, we get
    \[
    \int_v^{\overline{v}} \int_0^1 D_m(t)\ dm\ dt = \int_0^1 \int_v^{\overline{v}} h(t)\ dF_m(t)\ dm + \int_v^{\overline{v}} D^\ast(t)\ dt - \int_v^{\overline{v}} h(t)\ dF^\ast(t).
    \]
    Let $\widehat{v}$ be the largest value such that the lhs of \eqref{eq:feas-s} fails; note that the rhs of \eqref{eq:feas-s} still holds for $v = \widehat{v}$. Plugging this into the above equation, we get
    \[
    \int_{\widehat{v}}^{\overline{v}} D^\ast(t)\ dt = \int_0^1 \int_{\widehat v}^{\overline{v}} h(t)\ dF_m(t)\ dm + \int_{\widehat v}^{\overline{v}} D^\ast(t)\ dt - \int_{\widehat v}^{\overline{v}} h(t)\ dF^\ast(t)
    \]
    which contradicts that the lhs of \eqref{eq:feas-s} fails at $\widehat{v}$.

    The second part of the proof is to show that \eqref{eq:s} has a solution. We do this by taking a sequence $\{h_n\}$ approximating $h$, finding a solution $s_n$ for each $h_n$, and argue that the sequence $\{s_n\}$ converges. We take $h_n$ to be a function with the following properties:
    \[
    h_n \text{ is real-analytic, }\epsilon_n \leq h_n(v) \leq \overline{h}(v)\text{, and \eqref{eq:maj-0} holds for all } v \in [\underline{v},\overline{v} - \delta],
    \]
    where $\epsilon_n \rightarrow 0$ and $h_n(v) = h^\ast(v)$ for $v \geq \overline{v} - \delta$. Since the space of real-analytic functions is dense in the space of functions with bounded total variation, we can construct a sequence $\{h_n\}$ converging pointwise a.e.\ to $h$.\footnote{Technically, the sequence converges to $h$ in the $L^p$-norm, and by the Riesz-Fisher theorem we can then extract a pointwise convergent subsequence.}

We construct $s(v)$ inductively. Consider $v_2$ such that $s(v)$ satisfies \eqref{eq:s} for all  $v\in[v_2,\overline v]$ and $s^\prime(v_2)=0$. We construct $s(v)$ in a new interval $[v_1,v_2]$ satisfying feasibility and such that $s^\prime(v)=0.$ We consider two cases, and show that we move between cases only when $E$ has a strict inflection point. Since    $h_n$ and $h^\ast$ are real-analytic,  $E_{h_n}$ has a finite number of strict inflection points.  Hence, the process finishes after a finite number of steps. Finally, note that $s(v) = 1$ for $v \geq \overline{v} - \delta$, so we start the induction with Case 1.

\textbf{(Case 1: $E_h^{\prime\prime}(v) \geq 0$  for all $v$ in some neighborhood $[v_2-\epsilon,v_2]$)} Define, for each $m$,
\[
\widehat{D}_m(v) = D_m(v_2) \exp\left(\int_v^{v_2} \frac{1}{h_n(t)}\ dt\right).
\]
Since $h_n$ is strictly bounded away from 0, $\frac{1}{h_n}$ is integrable and this is well-defined. These survival functions satisfy the constant hazard rate requirement. Now, define $s_n(v)$ by the following differential equation:
\begin{equation}\label{eq:ode-s}
    \frac{ds_n}{dv} = \frac{1}{\widehat{D}_{s_n(v)}(v)} \cdot \frac{d^2 E_{h_n}}{dv^2}.
\end{equation}
This is an ordinary differential equation where the rhs is continuous in $s$ (recall that $h_n$ is real-analytic), so by Carath\'{e}odory's existence theorem, a solution exists. It is clear that in this interval, $s$ is increasing. Furthermore, \eqref{eq:ode-s} is derived from \eqref{eq:s} by taking the second derivative and setting $\Delta_m(v) = 0$ (since $s_n$ is increasing), and hence \eqref{eq:s} is satisfied.

\textbf{(Case 2: $E_h^{\prime\prime}(v) < 0$   for all $v$ in some neighborhood $[v_2-\epsilon,v_2]$)}
For each $m$, let $\widehat{v}(m)$ be the lowest inverse of $s_n(v)$ in $[v_2,\overline{v}]$:
\[
\widehat{v}(m) \equiv \inf\big\{v \in [v_2,\overline{v}] \mid s_n(v) \geq m\big\}.
\]
For all $m \geq s_n(v_2)$, this value exists, since $s_n(\overline{v}) = 1$. We define $s_n(v)$ so that
\begin{equation}\label{eq:int-s}
    \int_{s_n(v)}^1 \big(\widehat{v}(m) - v\big) D_m(\widehat{v}(m))\ dm = E_{h_n}(v).
\end{equation}
Note that this is just \eqref{eq:s} under the assumption that $s_n(v)$ is decreasing in $[v,v_2]$. Let $v_1$ be the largest value less than $v_2$ such that the solution to \eqref{eq:int-s} is decreasing in $[v_1,v_2]$.  Taking the second derivative of \eqref{eq:s}, and evaluating at $v_1$ (we have $s_n'(v_1)=0$) we get that
\[
E_{h_n}^{\prime\prime}(v_1) = -s_n^{\prime\prime}(v_1) \big(\widehat{v}(s_n(v_1)) - v_1\big) D_m(\widehat{v}(s_n(v_1))) \geq 0
\]
meaning that $[v_1,v_2]$ is at least as large as the gap between the strict inflection points of $E_{h_n}$. Thus, we move back to Case 1, and after a finite number of steps, the process ends.

Finally, we note that since $h^\prime(v) < 1$ everywhere, when $s_n$ is increasing,
\[
\frac{ds_n}{dv} = \frac{1}{\widehat{D}_{s(v)}(v)} E^{\prime\prime}_{h_n}(v) < \frac{1}{D^\ast(\overline{v} - \delta)}\left(\max_v \left[2 f^\ast(v) + \frac{df^\ast(v)}{dv} \overline{h}(v)\right]\right).
\]
Hence, all $s_n$ are bounded in $[0,1]$ with uniform upper bound on $s_n^\prime(v)$, which in turn means they have uniformly upper bounded total variation. Thus, by Helly's Selection Theorem, there is a subsequence of $s_n$ converging pointwise.
\end{proof}

Together, these results establish Proposition \ref{prop:unif-seg-ub}.
\end{proof}

As a corollary of Proposition \ref{prop:sup-max}, we get necessary and
sufficient conditions for $h^\ast$ to solve \eqref{eq:ub}, extending
Corollary \ref{cor:no-seg2} to the general setting.

\begin{corollary}[Zero Segmentation]
\label{cor:no-seg} Zero segmentation is optimal if and only if both \textup{%
(i)} $\frac{\partial}{\partial h} u(v,h^\ast(v))$ is non-decreasing and 
\textup{(ii)} $u(v,h^\ast(v)) = \overline u(v,h^\ast(v))$ hold a.e.
\end{corollary}

\begin{proof}[Proof of Theorem \ref{thm:main}]
    Theorem \ref{thm:main} consists of two parts. The first, that the value of segmentation is characterized by \eqref{eq:main-thm}, follows from Proposition \ref{prop:unif-seg-ub} after we rule out non-uniform segmentations. Any segmentation that is supported only on regular markets generates consumer surplus
    \[
    \int_{\Delta V} \int_{\underline{v}}^{\overline{v}} u(v,h_m(v))\ d\Sigma_v(m)\ dF^\ast(v) \leq \int_{\underline{v}}^{\overline{v}} \int_{\Delta V} \overline{u}(v,h_m(v))\ d\Sigma_v(m)\ dF^\ast(v).
    \]
    But, by Jensen's inequality,
    \[
    \int_{\underline{v}}^{\overline{v}} \int_{\Delta V} \overline{u}(v,h_m(v))\ d\Sigma_v(m)\ dF^\ast(v) \leq \int_{\underline{v}}^{\overline{v}} \overline{u}(v,h_\sigma(v))\ dF^\ast(v),
    \]
    with strict inequality whenever $h_m \neq h_\sigma$ a.e. (indeed, equality in Jensen would require \(h_\sigma(v)\) to lie on an affine face of \(\overline u(v,\cdot)\), which is ruled out at an optimum by the no-relaxation-gap argument). But, by Propositions \ref{prop:avg-h}-\ref{prop:unif-seg-ub}, the maximum possible value of the rhs is achievable with a uniform segmentation, so any non-uniform segmentation achieves lower consumer surplus than the optimal uniform segmentation. This proves that the optimal segmentation attains value \eqref{eq:main-thm}. 
    
    The second part of the theorem requires us to rule out segmentations supported on irregular markets that nonetheless achieve the same value. To do this, take any irregular market, and decompose it via Lemma \ref{lm:csir} to weakly regular markets (of which there must be a positive measure). Our first step is to show that
    \begin{equation}\label{eq:sup-cont}
        \max_{\sigma \in \operatorname{MPS}(m)} \int_{x \in \Delta V} U(x)\ d\sigma(x) \geq \sup_{h \prec h_m} \int_{\underline{v}}^{\overline{v}} \overline{u}(v,h(v))\ dF_m(v).
    \end{equation}
    If $m$ is absolutely continuous, this is immediate because it follows from \eqref{eq:main-thm}. To accommodate markets with atoms, we first prove that the rhs of \eqref{eq:sup-cont} is lsc while the lhs is usc in $m$.

For the rhs, we assume all $h\in\mathcal{H}$ are lsc (this convention obviously does not matter when $m$ is continuous). Since $u$ is increasing in $h$, the objective function is lsc in $F_m$ (Theorem 15.3, \cite{albo06}). On the other hand, the set 
\[ \{h\in\mathcal{H}:h\prec h_m\}\]
is lower hemicontinuous in $m$. Lower hemicontinuity is straightforward because any $\widehat h(v)$ that is pointwise smaller than $h(v)$, with $h\prec h_m$, satisfies $\widehat h\prec h_{m^\prime}$ for all $m^\prime$ close enough to $m$. Hence, following the maximum theorem, the value is lsc in $m$.
 
For the lhs of \eqref{eq:sup-cont}, note that, under our tie-breaking rules, $U(m)$ is bounded and usc (see \cite{kahy23}), meaning that (Theorem 15.3, \cite{albo06})
    \[
    \int_{x \in \Delta V} U(x)\ d\sigma(x)
    \]
    is itself usc in $\sigma$. Furthermore, by standard results, the mapping $m \mapsto \operatorname{MPS}(m)$ is nonempty, compact-valued, and continuous. Hence, by Berge's theorem, the lhs is usc.

    Now, if $m$ is only weakly regular, there exists a positive measure subinterval $[v_1,v_2]$ (possibly with atoms) such that $\frac{\partial}{\partial h} \overline{u}(v,h_m(v))$
    is strictly decreasing (if $u(v,h_m(v)) = \overline{u}(v,h_m(v))$, this follows from \eqref{eq:supermod-no-h}; else, see \eqref{eq:h-slope}). Hence, $h_m$ does not satisfy the first-order condition, meaning it does not solve \eqref{eq:relaxed}.\footnote{When $m$ has atoms, the KKT system may not characterize the solution, but the FOC is still necessary; see \eqref{eq:d-gaph} and the discussion that follows.} We thus have that:
    \[
    \int_{\underline{v}}^{\overline{v}} \overline{u}(v,h_m(v))\ dF_m(v)<  \sup_{h \prec h_m} \int_{\underline{v}}^{\overline{v}} \overline{u}(v,h(v))\ dF_m(v)\leq  \max_{\sigma \in \operatorname{MPS}(m)} \int_{x \in \Delta V} U(x)\ d\sigma(x).
    \]
    But, the lhs is weakly larger than the consumer surplus in $m$. So, there is a further segmentation of $m$ which strictly improves consumer surplus.
\end{proof}

\begin{proof}[Proof of Proposition \ref{prop:cmv}]
We first note that for any $h$,
\begin{equation}\label{refzxc}
\frac{\partial u(v,h)}{\partial h}\geq0\implies \frac{\partial^2 u(v,h)}{\partial h^2}<0.
\end{equation}
This follows from the fact that
\begin{equation}\label{cc8dc}
\frac{\partial u(v,h)}{\partial h}\geq0 \iff \frac{1}{h}\geq \frac{Q^{\prime}(v-h)}{Q(v-h)}
\end{equation}
and
\[
h\left(Q^{\prime\prime}(v-h)-\frac{2Q^{\prime}(v-h)}{h}\right)\leq h\left(Q^{\prime\prime}(v-h)-2\frac{(Q^{\prime}(v-h))^2}{Q(v-h)}\right)\leq-h\frac{(Q^{\prime}(v-h))^2}{Q(v-h)},
\]
where the first inequality uses \eqref{cc8dc} while the second one follows from \eqref{eq:conv-mc}. We thus obtain \eqref{refzxc}. This implies that $u(v,h)$ is strictly quasi-concave in $h$. Furthermore, we have that:
\[\frac{\partial^2 u(v,h)}{\partial h\partial v}=h\left(\frac{Q^\prime(v-h)}{h}-Q^{\prime\prime}(v-h)\right)\geq 0,\]
where the inequality follows as before from \eqref{cc8dc} and \eqref{eq:conv-mc}. This implies that $\overline h(v)$ is increasing, which, in turn, implies that $\overline h(v)$ and $h^\ast(v)$ cross at most once (where we use that $h^\ast$ is decreasing for this last implication). Hence, we have that $u(v,h(v)) = \overline u(v,h(v))$ for $h(v)$ defined as in \eqref{fcx}. Finally, we show that $\partial u(v,h(v))/\partial h$ is increasing in $v$. For all $v\leq \widehat v$ we have that the derivative is zero; for all $v\geq \widehat v$ we have that the derivative is positive. Finally, we have that the derivative is also increasing  for all $v\geq \widehat v$:
\begin{align*}
    \frac{d}{dv}\left(\frac{\partial u(v,h^\ast(v))}{\partial h}\right)=&\left(1-\frac{dh^\ast(v)}{dv}\right) \left(\frac{\left(1-2 \frac{dh^\ast(v)}{dv}\right)}{\left(1-\frac{dh^\ast(v)}{dv}\right) } Q'(v-h^\ast(v))-h^\ast(v) Q''(v-h^\ast(v))\right) \\
    &>\left(1-\frac{dh^\ast(v)}{dv}\right) \left(  Q'(v-h^\ast(v))-h^\ast(v) Q''(v-h^\ast(v))\right)>0,
\end{align*}
where the first inequality follows from the fact that $h^\ast(v)$ is decreasing, so the fraction is larger than 1, while the second inequality follows from  \eqref{cc8dc} and \eqref{eq:conv-mc}. Thus, $\partial u(v,h(v))/\partial h$ is increasing in $v$. Following Proposition \ref{prop:sup-max}, this is the solution to \eqref{eq:main-thm}.
\end{proof}
\begin{proof}[Proof of Theorem \ref{thm:conv-seg}]
The only thing left is to show that the canonical segmentation is gapless and atomless. By the proof of Proposition \ref{prop:canon-seg}, this is equivalent to $s$ being increasing, or equivalently, $E_h$ being convex. By \eqref{eq:s},
\[
\frac{d}{dv} E_h(v) = \big[h(v) - h^\ast(v)\big] f^\ast(v).
\]
For $v \geq \widehat{v}$, $h(v) = h^\ast(v)$; otherwise, $h(v) < h^\ast(v)$ with $h$ increasing and $h^\ast$ decreasing. So, it is immediate that $E_h$ is convex.
\end{proof}

\begin{proof}[Proof of Proposition \ref{prop:com}] We use the conditions in Corollary \ref{cor:no-seg} to know when zero segmentation is optimal. First, we note that  $u(v,h^\ast(v))=\overline u(v,h^\ast(v))$ if and only if 
\[h^\ast(v)\leq \overline h(v)=\frac{\gamma-1}{\gamma}v.\]
Since $h(v)$ is increasing in $\gamma$, the condition becomes less restrictive as $\gamma$ increases. Second, we note that $\partial u(v,h^\ast(v))/\partial h$ is increasing in $v$ if and only if:
\begin{equation}\label{dcxz}
(\gamma-1)\left(v-\frac{\partial h^\ast(v)}{\partial v}(2v- h^\ast(v) )\right)+ \left(\frac{\partial h^\ast(v)}{\partial v}-1\right)h^\ast(v)\geq0.
\end{equation}
Finally, we have that $\partial h^\ast(v)/\partial v\leq1$ so the second term is negative (otherwise, this is an irregular market, so zero segmentation is not optimal due to Theorem \ref{thm:main}). Thus, if at some $v$, \eqref{dcxz} is satisfied with equality, then \eqref{dcxz} will not be satisfied also for all lower values of $\gamma$. Hence, among regular markets, \eqref{dcxz} also becomes less restrictive as $\gamma$ increases.
\end{proof}

We now provide the proofs of the results in Section \ref{sec:gen-exts}. We
consider the following problem: 
\begin{equation}  \label{dcxz4}
\begin{split}
\overline W\equiv &\max_{\sigma\in \operatorname{MPS}(m^\ast)}\int_{\Delta
V}\int_{\underline v}^{\overline v} w(v,h_m(v))\ dF_m(v)\ d\sigma(m) \\
&\text{subject to: every }m\in\mathrm{supp}(\sigma)\text{ is regular}.
\end{split}%
\end{equation}
This is the same as \eqref{reg:maxim}, but replaced $u(v,h)\to w(v,h)$.
Lemma \ref{lm:csir} will be satisfied in all the extensions we consider, so
considering this formulation will indeed be justified. We assume that $w$ is
not too supermodular relative to the concavity: 
\begin{equation}
\frac{\partial ^{2}w(v,h)}{\partial h^{2}}+\frac{\partial ^{2}w(v,h)}{%
\partial h\partial v}\leq 0.  \label{eq:smc}
\end{equation}%
This is an economically substantive assumption---it implies that higher
values are associated with only moderately higher inverse hazard rates (or
lower inverse hazard rates)---and is used to guarantee the solution to %
\eqref{eq:ub} is regular. Second, we assume that there exists $\epsilon$
such that for all $h\in(0,\epsilon)$: 
\begin{equation}  \label{condw2}
\frac{\partial w(v,h)}{\partial h}\text{ is non-decreasing in }v\text{ and
is greater than }\frac{w(v,h^{\prime })}{h^{\prime }}\text{ for all }h^{\prime
}\in[\epsilon,\infty).
\end{equation}
This assumption rules out corner cases at \(h=0\) in the relaxed problem. In particular, it ensures that whenever the concavification differs from the original objective, this happens away from the boundary. This is used immediately after \eqref{eq:h-slope} in our main argument.

\begin{theorem}[Generalized Upper Bound]
\label{thm:gen} Under \eqref{eq:smc}-\eqref{condw2}, the maximum of %
\eqref{dcxz4} is given by: 
\begin{equation}
\overline W=\max_{h\prec h^{\ast }}\ \int_{\underline{v}}^{\overline{v}%
}w(v,h(v))\ dF^{\ast }(v).  \label{eq:geng1}
\end{equation}
Furthermore, every segmentation achieving this value is a uniform
segmentation implementing some $h$ which solves the rhs of \eqref{eq:geng1}.
\end{theorem}

The theorem generalizes our main result to objective functions that do not
consist of maximizing the local informational rents. The theorem simply
identifies the conditions of $u$ we used to prove Theorem \ref{thm:main}, so
it does not involve any new analysis. We now provide the proofs of the
results in the main text.

\begin{proof}[Proof of Proposition \ref{ext:adv}]
Following standard techniques, the seller's supply is:
\[Q(\phi_m(v))=v-\frac{d\tau(v)}{dv}\frac{1-F_m(v)}{f_m(v)}.\]

Hence, the expected consumer surplus,
\[\int_{\underline v}^{\overline v}\frac{d\tau(v)}{dv}\frac{1-F_m(v)}{f_m(v)}\left(v-\frac{d\tau(v)}{dv}\frac{1-F_m(v)}{f_m(v)}\right)dF_m(v).\]
We thus obtain that the consumer surplus is of the form \eqref{dcxz4} with $w(v,h)$ given by \eqref{adv:5}. Lemma \ref{lm:csir} is clearly satisfied as we can apply the same arguments. Furthermore,
\[
\frac{\partial ^{2}w(v,h)}{\partial h^{2}}+\frac{\partial
^{2}w(v,h)}{\partial h\partial v} = \frac{d\tau(v)}{dv}\left( 1+\frac{\frac{vd^2\tau(v)}{dv^2}}{\frac{d\tau(v)}{dv}}-2\frac{d\tau(v)}{dv}\right)
   -4 h(v)\frac{d\tau(v)}{dv}  \frac{d^2\tau(v)}{dv^2}<0
\]
and
\[
\frac{\partial w(v,h)}{\partial h}\mid_{h=0} = v\frac{d\tau(v)}{dv}.
\]
Thus, \eqref{eq:smc}-\eqref{condw2} are satisfied.
\end{proof}

\begin{proof}[Proof of Proposition \ref{ext:sw}]
We can verify that for all $\lambda\geq -1$, Lemma \ref{lm:csir} holds and $w(v,h)$ satisfies:
\[
\frac{\partial ^{2}w(v,h)}{\partial h^{2}}+\frac{\partial
^{2}w(v,h)}{\partial h\partial v} =-(1+\lambda)Q^{\prime}(v-h) \text{ and } \frac{\partial w(v,h)}{\partial h}\mid_{h=0}=Q(v).
\]
Thus, \eqref{eq:smc}-\eqref{condw2} are satisfied.
\end{proof}

\newpage 
\bibliographystyle{econometrica}
\bibliography{references.bib}

\end{document}